\newtheorem{probaux}{Problem}[section]
\newtheorem{theorem}{Theorem}
\newtheorem{corollary}[theorem]{Corollary}
\newtheorem{lemma}[theorem]{Lemma}
\newenvironment{prob}[3]{\bigskip\noindent\framebox{\parbox{0.97\textwidth}{\begin{probaux}{\sc
#1}\\{\hspace*{1cm} \bf \sf Instance:} #2\\{\hspace*{1cm} \bf \sf Question:}
#3\end{probaux}}}\bigskip}{}
\begin{document}

\title{Biclique-colouring verification complexity and 
biclique-colouring power graphs\thanks{An extended abstract published in: Proceedings
of Cologne Twente Workshop (CTW) 2012, pp. 134--138.
Research partially supported by FAPERJ--Cientistas do Nosso Estado,
and by CNPq-Universal.}}

\author[1]{H\'elio B. Mac\^edo Filho}
\author[2]{Simone Dantas}
\author[3]{\\ Raphael C. S. Machado}
\author[1]{Celina M. H. Figueiredo}
\affil[1]{COPPE, Universidade Federal do Rio de Janeiro}
\affil[2]{IME, Universidade Federal Fluminense}
\affil[3]{Inmetro --- Instituto Nacional de Metrologia, Qualidade e Tecnologia.}

\date{}

\maketitle

\let\thefootnote\relax\footnotetext{
%\itshape This is a draft. Please do not quote. 
\hfill\today
}

\begin{abstract}
Biclique-colouring is a colouring of the vertices of a graph in such a way that
no maximal complete bipartite subgraph with at least one edge is monochromatic. 
We show that it is co$\mathcal{NP}$-complete to check whether a given 
function that associates a colour to each vertex is a biclique-colouring, 
a result that justifies the search for structured classes where the 
biclique-colouring problem could be efficiently solved. We consider 
biclique-colouring restricted to powers of paths and powers of cycles.    
We determine the biclique-chromatic number of
powers of paths and powers of cycles. The biclique-chromatic number of a power
of a path $P_{n}^{k}$ is $\max(2k + 2 - n, 2)$ if $n \geq k + 1$ and exactly
$n$ otherwise. The biclique-chromatic number of a power of a cycle $C_n^k$
is at most~3 if $n \geq 2k + 2$ and exactly $n$ otherwise; we additionally determine the powers
of cycles that are 2-biclique-colourable. All proofs are algorithmic and
provide polynomial-time biclique-colouring algorithms for graphs in the
investigated classes.
\end{abstract}

\section{Introduction}
\label{s:introduction}

Let $G=(V,E)$ be a simple graph with order $n=|V|$ vertices and
$m=|E|$ edges.
A \emph{clique} of $G$ is a maximal set of vertices of size at least~2 that
induces a complete subgraph of $G$.
A \emph{biclique} of $G$ is a maximal set of vertices that induces a complete
bipartite subgraph of $G$ with at least one edge.
A \emph{clique-colouring} of~$G$ is a function $\pi$ that associates a colour to
each vertex such that no clique is monochromatic. If the function uses at
most~$c$ colours we say that $\pi$ is a \emph{$c$-clique-colouring}.
A \emph{biclique-colouring} of $G$ is a function $\pi$ that associates a colour
to each vertex such that no biclique is monochromatic. If the function $\pi$
uses at most~$c$ colours we say that $\pi$ is a \emph{$c$-biclique-colouring}.
The \emph{clique-chromatic number} of $G$, denoted by $\kappa(G)$, is the least
$c$ for which $G$ has a $c$-clique-colouring. The \emph{biclique-chromatic
number} of $G$, denoted by $\kappa_B(G)$, is the least $c$ for which $G$ has a
$c$-biclique-colouring.

Both clique-colouring and biclique-colouring have a ``hypergraph colouring
version.'' Recall that a hypergraph $\mathcal{H}=(V,\mathcal{E})$ is an ordered
pair where $V$ is a set of vertices and $\mathcal{E}$ is a set of hyperedges,
each of which is a set of vertices. A colouring of hypergraph
$\mathcal{H}=(V,\mathcal{E})$ is a function that associates a colour to each
vertex such that no hyperedge is monochromatic. Let $G=(V,E)$ be a graph and let
$\mathcal{H}_C(G)=(V,\mathcal{E}_C)$ and $\mathcal{H}_B(G)=(V,\mathcal{E}_B)$
be the hypergraphs in which hyperedges are, respectively, $\mathcal{E}_C=\{K \subseteq
V\mid K \mbox{ is a clique of } G\}$ and $\mathcal{E}_B=\{K \subseteq V\mid
K \mbox{ is a biclique of } G\}$ --- hypergraphs $\mathcal{H}_C(G)$ and
$\mathcal{H}_B(G)$ are called, resp., the \emph{clique-hypergraph} and the
\emph{biclique-hypergraph} of~$G$. A clique-colouring of~$G$ is a colouring of
its clique-hypergraph $\mathcal{H}_C(G)$; a biclique-colouring of~$G$ is a
colouring of its biclique-hypergraph $\mathcal{H}_B(G)$.

Clique-colouring and biclique-colouring are analogous problems in the sense that
they refer to the colouring of hypergraphs arising from graphs.
In particular, the hyperedges are subsets of vertices that are clique (resp. biclique).
The clique is a classical important
structure in graphs, hence it is natural that the clique-colouring problem has
been studied for a long time ---
see~\cite{Bacso,Defossez,Kratochvil,DanielMarx}. 
Only recently the biclique-colouring problem started to be
investigated~\cite{1210.7269}. 

Many other problems, initially stated for cliques, have
their version for bicliques~\cite{MR0409253,MR0065617}, such as
\emph{Ramsey number} and \emph{Tur\'an's theorem}.
The combinatorial game called on-line Ramsey
number also has a version for bicliques~\cite{MR2594965}. Although complexity
results for complete bipartite subgraph problems are mentioned
in~\cite{GareyJohnson} and the (maximum) biclique problem is shown to be
$\mathcal{NP}$-hard in~\cite{Yannakakis}, only in the last decade the (maximal)
bicliques were rediscovered in the context of counting
problems~\cite{Gaspers,Prisner}, enumeration problems~\cite{Dias1,Nourine1}, and
intersection graphs~\cite{MarinaJayme}. 

Clique-colouring and biclique-colouring
have similarities with usual vertex-colouring. A proper vertex-colouring is also
a clique-colouring and a biclique-colouring --- in other words, both the
clique-chromatic number and the biclique-chromatic number are bounded above by
the vertex-chromatic number. Optimal vertex-colourings and clique-colourings
coincide in the case of $K_3$-free graphs, while optimal vertex-colourings and
biclique-colourings coincide in the (much more restricted) case of
$K_{1,2}$-free graphs --- notice that the triangle $K_3$ is the minimal
complete graph that includes the graph induced by one edge ($K_2$), while the
$K_{1,2}$ is the minimal complete bipartite graph that includes the graph
induced by one edge ($K_{1,1}$). But there are also essential differences. Most
remarkably, it is possible that a graph has a clique-colouring (resp.
biclique-colouring), which is not a clique-colouring (resp.
biclique-colouring) when restricted to one of its subgraphs.
Subgraphs may even have a larger clique-chromatic number (resp.
biclique-chromatic number) than the original graph. 

Clique-colouring and biclique-colouring also have similarities on complexity issues. 
It is known~\cite{Bacso} that it is co$\mathcal{NP}$-complete
to check whether a given function that associates a colour to each vertex is a clique-colouring
by a reduction from $3DM$. Later, an alternative $\mathcal{NP}$-completeness
proof was obtained by a reduction from a variation of $3SAT$, in order to
construct the complement of a bipartite graph~\cite{Defossez}. Based on this, we open this
paper providing a corresponding result regarding the biclique-colouring problem:
it is co$\mathcal{NP}$-complete to check whether a given function that associates 
a colour to each vertex is a biclique-colouring. The
co$\mathcal{NP}$-completeness holds even when the input is a $\{C_4, K_4\}$-free
graph.

We select two structured
classes for which we provide linear-time biclique-colouring algorithms:
powers of paths and powers of cycles. The choice of those classes has also
a strong motivation since they have been recently investigated in the context of
well studied variations of colouring problems. 
For instance, for a power of a path~$P_n^k$, its $b$-chromatic number is $n$, if $n \leq k + 1$; $k +
1 + \lfloor \frac{n-k-1}{3} \rfloor$, if $k + 2 \leq n \leq 4k + 1$; or $2k +
1$, if $n \geq 4k + 2$; whereas, for a power of a cycle
$C_{n}^{k}$, its $b$-chromatic number is $n$, if $n \leq 2k + 1$; $k + 1$, if $n
= 2k + 2$; at least $\min(n-k-1, k + 1 + \lfloor \frac{n-k-1}{3} \rfloor)$, if $2k + 3
\leq n \leq 3k$; $k + 1 + \lfloor \frac{n-k-1}{3} \rfloor$, if $3k + 1 \leq n
\leq 4k$; or $2k + 1$, if $n \geq 4k + 2$~\cite{MR1979111}. Moreover, other well
studied variations of colouring problems when restricted to powers of cycles have been
investigated: chromatic number~\cite{MR1974376}, chromatic
index~\cite{meidanis}, total chromatic number~\cite{MR2303972}, choice
number~\cite{MR1974376}, and clique-chromatic number~\cite{MR2570638}.
It is known, for a power of a cycle $C_{n}^{k}$, that
the chromatic number and the choice number are both $k + 1 + \lceil r/q
\rceil$, where $n = q(k+1) + t$ with $q \geq 1$, $0 \leq t \leq k$ and $n \geq
2k + 1$, that the chromatic index is the maximum degree of $C_{n}^{k}$ if, and only if,
$n$ is even, that the total chromatic number is at most the maximum degree of
$C_{n}^{k}$ plus 2, when $n$ is even and $n \geq 2k + 1$, and that the
clique-chromatic number is~$2$, when $n \leq 2k + 1$, and is at most~3, when $n
\geq 2k + 2$.
Particularly, in the latter case, the clique-chromatic number is~3, when $n$ is
odd and $n \geq 5$; otherwise, it is~2. Note that total colouring is an open
and difficult problem and remains unsolved for powers of cycles~\cite{MR2303972}. Other
significant works have been done in power
graphs~\cite{MR1454439,MR2423405} and, in particular, in powers of
paths and powers of
cycles~\cite{MR1018529,MR1172679,MR2083449,MR2774114,MR1633075,MR2255625}.

\section{Complexity of biclique-colouring}

The biclique-colouring problem is a variation of the clique-colouring problem. Hence,
it is natural to investigate the complexity of biclique-colouring based on the
tools that were developed to determine the complexity of clique-colouring. 
We show that, similarly to the case of clique-colouring, it is
co$\mathcal{NP}$-complete to check whether a given function that associates a
colour to each vertex of a graph is a biclique-colouring.
To achieve a result in this direction, we prove the $\mathcal{NP}$-completeness
of the following problem: of deciding whether there exists a biclique of a
graph $G$ contained in a given subset of vertices of~$G$. Indeed, a function
that associates a colour to each vertex of a given graph $G$ is a
biclique-colouring if, and only if, there is {\bf no} biclique of~$G$
contained in a subset of the vertices of~$G$ associated with the same colour.

We call {\sc Biclique Containment} the problem that decides whether there exists a
biclique of a graph $G$ contained in a given subset of vertices of~$G$.

\begin{prob}
	{
		Biclique Containment
	}
	{
		Graph~$G = (V, E)$ and $V^\prime \subset V$
	}
	{
		Does there exist a biclique $B$ of $G$ such that $B \subseteq V^\prime$? 
	}
\label{prob:confinamentobicliquemaximal}
\end{prob}

In order to show that {\sc Biclique Containment} is $\mathcal{NP}$-complete, 
we use in Theorem~\ref{thm:bicliquecoloracaoinvalidanpcompleto} a reduction
from {\sc 3SAT} problem.

\begin{theorem}
\label{thm:bicliquecoloracaoinvalidanpcompleto}
	The {\sc Biclique Containment} problem is $\mathcal{NP}$-complete, even if
	the input graph is $\{K_{4}, C_{4}\}$-free.
\end{theorem}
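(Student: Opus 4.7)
The plan is to show membership in $\mathcal{NP}$ first and then $\mathcal{NP}$-hardness via a polynomial-time reduction from \textsc{3SAT}. Membership is immediate: a purported biclique $B \subseteq V'$ can be verified in polynomial time by checking that $G[B]$ is a complete bipartite graph with at least one edge and that no vertex of $V(G)\setminus B$ can be added while preserving the complete bipartite structure, i.e., that $B$ is maximal.

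For the reduction, given a 3SAT formula $\phi$ with variables $x_1,\dots,x_n$ and clauses $C_1,\dots,C_m$, I would build a graph $G_\phi$ together with a subset $V'\subseteq V(G_\phi)$ such that $\phi$ is satisfiable if and only if $G_\phi$ admits a biclique contained in $V'$. Since $G_\phi$ must be $\{K_4,C_4\}$-free and $K_{2,2}\cong C_4$, every biclique of $G_\phi$ is a star $K_{1,t}$, so the reduction aims to produce a single natural centre, a hub $c\in V'$, whose leaf set encodes a satisfying assignment. A variable gadget places, for each $x_i$, two literal vertices $\ell_i^T,\ell_i^F\in V'$ joined to $c$ and to each other; the triangle $\{c,\ell_i^T,\ell_i^F\}$ forbids both literal vertices of the same variable from being leaves simultaneously, while the only way to cover an uncovered literal vertex (as required by maximality) is to use the other literal of the same variable, forcing exactly one of $\ell_i^T,\ell_i^F$ into the leaf set. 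Clause gadgets consist of vertices placed outside $V'$, so that they cannot themselves be leaves but must still be covered; their adjacencies to $c$ and to literal vertices of the corresponding clause are arranged so that covering them in the maximality sense demands at least one true literal of that clause to appear in the leaf set.

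The main obstacle I expect is making $G_\phi$ simultaneously $K_4$-free and $C_4$-free while still enforcing clause satisfaction. A naive clause vertex adjacent to $c$ and to the three literal vertices of its clause creates induced $C_4$'s whenever two clauses share two literals, or whenever a literal and its negation both appear in several clauses; likewise a single common neighbour of $c$ and a literal vertex, combined with the variable gadget triangle, can produce forbidden configurations. To sidestep this, the clause gadget has to distribute its incidences through private auxiliary vertices, so that no two non-adjacent vertices share two common neighbours. I expect the $\{K_4,C_4\}$-freeness check to reduce to a case analysis over the vertex types and all four-vertex configurations spanning several gadgets. Correctness then follows in two directions: a satisfying assignment yields a biclique by taking $L=\{\ell_i^T:x_i=\mathrm{T}\}\cup\{\ell_i^F:x_i=\mathrm{F}\}$ and verifying maximality at every literal vertex and clause vertex; conversely, any biclique in $V'$ must be centred at $c$ (the only vertex of $V'$ with the required neighbourhood), its leaf set is an independent set in the literal triangles and hence a consistent truth assignment, and the maximality condition at each clause gadget forces at least one true literal per clause.
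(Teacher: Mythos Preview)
Your plan matches the paper's architecture almost exactly: a hub $u\in V'$, literal pairs $x_i,\overline{x_i}\in V'$ each forming a triangle with $u$, and clause vertices outside $V'$ adjacent to $u$ and to their literals, so that stars centred at $u$ encode truth assignments and maximality at the clause vertices enforces satisfaction. Where you diverge is in how to secure $C_4$-freeness. You propose to distribute clause incidences through private auxiliary vertices; the paper instead \emph{preprocesses the \textsc{3SAT} instance} so that (i) no clause contains a literal together with its negation, (ii) every variable occurs in some clause, and (iii) any two clauses share at most one literal --- condition (iii) is achieved by replacing each clause $(l_i,l_j,l_k)$ by four clauses over three fresh variables. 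With (i)--(iii) the naive construction (one vertex per clause, adjacent to $u$ and to its three literals) is already $\{K_4,C_4\}$-free, and the verification is a short case analysis. This is considerably simpler than adding auxiliary vertices and spares you the need to check that such vertices do not themselves spoil maximality or create unwanted bicliques inside $V'$.

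Two smaller points. First, your worry that ``a literal and its negation both appear in several clauses'' yields a $C_4$ is misplaced here: clause vertices are pairwise non-adjacent and the hub is universal, so the only way to get an induced $C_4$ is two clause vertices sharing two non-adjacent literal neighbours, i.e.\ two clauses with two common literals --- precisely what (iii) rules out. Second, your assertion that every biclique contained in $V'$ is centred at $c$ is not automatic: the edge $\{\ell_i^T,\ell_i^F\}$ \emph{is} a biclique of $G[V']$. The paper disposes of it using assumption (ii): some clause vertex is adjacent to exactly one endpoint of that edge, so the edge is not maximal in $G$. You should include this case explicitly.
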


\begin{proof}
Deciding whether a graph has a biclique in a given
subset of vertices is in $\mathcal{NP}$: a biclique is a
certificate and verifying this certificate is trivially polynomial.

We prove that {\sc Biclique Containment} problem is $\mathcal{NP}$-hard by
reducing {\sc 3SAT} to it. The proof is outlined as follows. For every formula
$\phi$, a graph $G$ is constructed with a subset of vertices denoted by
$V^\prime$, such that $\phi$ is satisfiable if, and only if, there exists a
biclique $B$ of $G$ such that $B \subseteq V^\prime$.

	Let $n$ (resp. $m$) be the number of variables (resp. clauses) in formula
	$\phi$. We define the graph $G$ as follows.
	
	\begin{itemize}
		\item For each variable $x_{i}$, $1 \leq i \leq n$, there exist two adjacent
		vertices $x_{i}$ and~$\overline{x_{i}}$. Let $L = \{x_{1}, \dots, x_{n},
		\overline{x_{1}}, \dots, \overline{x_{n}}\}$.
		\item For each clause $c_{j}$, $1 \leq j \leq m$, there exists a vertex 
		$c_{j}$. Moreover, each $c_{j}$, $1 \leq j \leq m$, is adjacent to a vertex $l
		\in \{ x_{1}, \dots, x_{n}, \overline{x_{1}}, \dots, \overline{x_{n}}\}$
		if, and only if, the literal corresponding to $l$ is in the clause
		corresponding to vertex~$c_{j}$. Let $C = \{c_1, \ldots, c_m\}$.
		\item There exists a universal vertex $u$ adjacent to all $x_{i}$,
		$\overline{x_{i}}$, $1 \leq i \leq n$, and to all $c_j$, $1 \leq j \leq m$.
		
	\end{itemize}
	
	We define the subset of vertices $V^\prime$ as $\{u, x_{1}, \dots, x_{n},
	\overline{x_{1}}, \dots, \overline{x_{n}}\}$. Refer to 
	Figure~\ref{fig:cbmnpcompleto} for an example of such construction given a
	formula $\phi = ( x_{1} \vee \overline{x_{2}} \vee x_{4}) \wedge 
	( x_{2} \vee \overline{x_{3}} \vee \overline{x_{5}} )\wedge ( x_{1} \vee
	x_{3} \vee x_{5})$.

\begin{figure}[h]
\center
	\includegraphics[width=\textwidth]{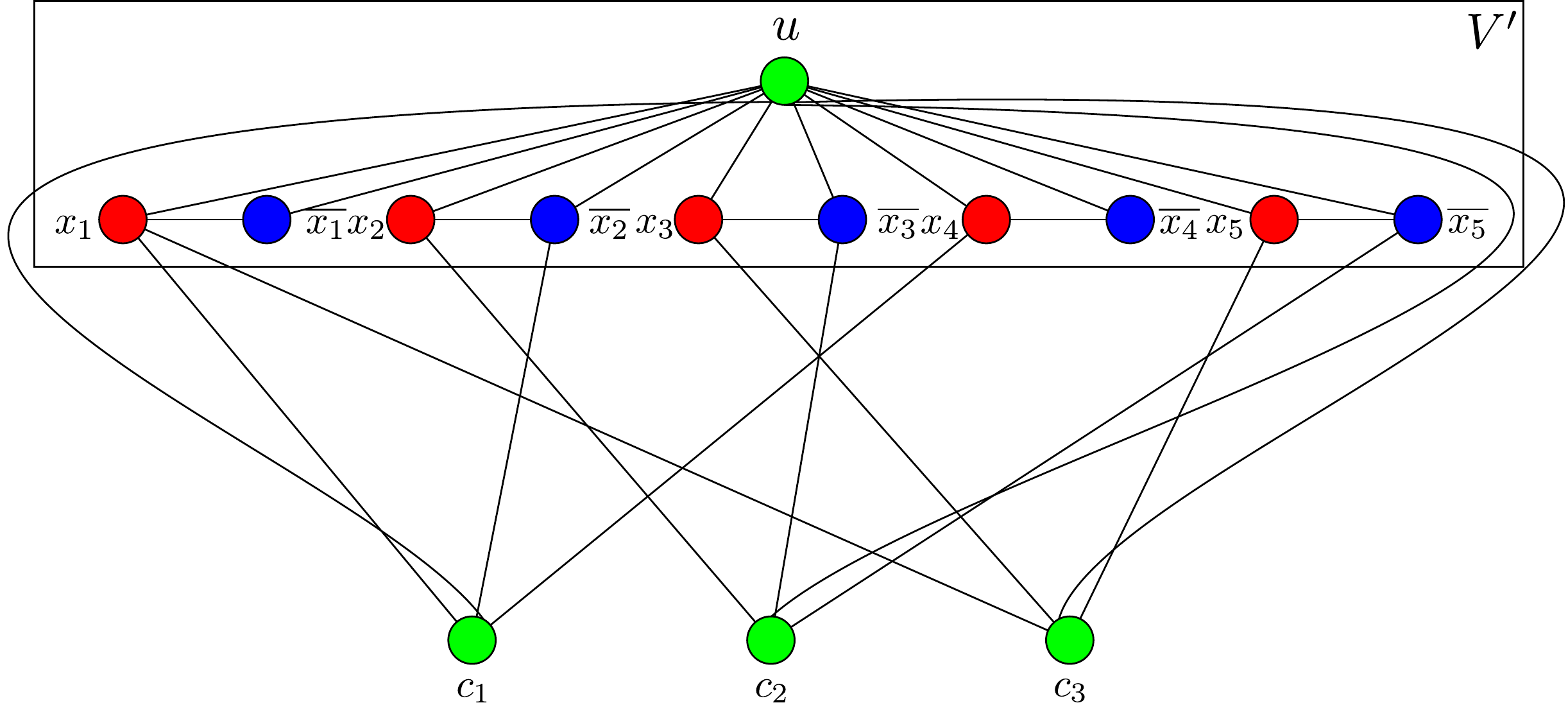}
	\caption{Example for $\phi = ( x_{1} \vee \overline{x_{2}} \vee x_{4}) \wedge 
	( x_{2} \vee \overline{x_{3}} \vee \overline{x_{5}} )\wedge ( x_{1} \vee
	x_{3} \vee x_{5})$ }
	\label{fig:cbmnpcompleto}
\end{figure}
	
	We claim that formula $\phi$ is satisfiable if, and only if, there exists a
	biclique of $G[V^\prime]$ that is also a biclique of~$G$.
	
	Each biclique $B$ of $G[V^\prime]$ containing vertex $u$ corresponds to a
	choice of precisely one vertex of $\{x_i, \overline{x_i}\}$, for each $1
	\leq i \leq n$, and so $B$ corresponds to a truth assignment $v_{B}$ that gives
	true value to variable $x_i$ if, and only if, the corresponding vertex $x_i
	\in B$.
	
	Notice that we may assume three properties on the {\sc 3SAT} instance.
	
	\begin{itemize}
	  \item A variable and its negation do not appear in the same clause. Else,
	  any assignment of values (true or false) to such a variable satisfies the
	  clause.
	  \item A variable appears in at least one clause. Else,
	  any assignment of values (true or false) to such a variable is
	  indifferent to formula $\phi$.
	  \item Two distinct clauses have at most one literal in common. 
	  Else, we can modify the instance as follows. For each clause 
	  $(l_i, l_j, l_k)$, we replace it by clauses $(l_i, x^\prime_1, x^\prime_2)$, 
	  $(l_j, x^\prime_1, \overline{x^\prime_2})$, $(l_j, \overline{x^\prime_1}, x^\prime_3)$,
	  and $(l_k, \overline{x^\prime_1}, \overline{x^\prime_3})$ with variables $x^\prime_1$,
	  $x^\prime_2$, and $x^\prime_3$. Clearly, the number of variables and clauses created is upper
	  bounded by 7 times the number of clauses in the original instance. 
	  Moreover, the original formula is satisfiable if, and only if, the new
	  formula is satisfiable. 
	\end{itemize}    
 
	We consider the bicliques of $G[V^\prime]$ according to two cases.
	\begin{enumerate}
		\item Biclique $B$ does not contain vertex $u$. Then, the biclique is
		precisely formed by a pair of vertices, say $x_i$ and $\overline{x_i}$,
		where $1 \leq i \leq n$. Now, our assumption says that there exists a $c_j$
		adjacent to one precise vertex in $\{x_i, \overline{x_i}\}$ which implies
		that $B$ is not a biclique of $G$.
		\item Biclique $B$ contains vertex $u$. Then, the biclique is
		precisely formed by vertex $u$ and one vertex of $\{x_{i},
		\overline{x_{i}}\}$, for each $1 \leq i \leq n$. $B$ is a biclique of $G$ if, and only if, for each
		$1 \leq j \leq m$, there exists a vertex $l \in L \cap B$ such that $c_j$ is
		adjacent to $l$, which in turn occurs if, and only if, the truth assignment
		$v_B$ satisfies $\phi$. Therefore, $B$ is a biclique of $G$ if, and only if,
		$v_B$ satisfies $\phi$.
	\end{enumerate}
	
	Now, we still have to prove that $G$ is $\{K_4, C_4\}$-free. 
	
	For the sake of contradiction, suppose that there exists a $K_4$ in $G$, say
	$K$. There are no two distinct vertices of $C$ in $K$, since $C$ is an independent set. There
	are no three distinct vertices of $L$ in $K$, since there is a non-edge
	between two of these three vertices. Hence, $K$ precisely contains vertex $u$,
	one vertex of $C$, and two vertices of $L$. Since $K$ is a complete set, the
	two vertices in $L \cap K$ are adjacent and the vertex of $C \cap K$ is adjacent
	to both vertices of $L \cap K$. This contradicts our assumption that a
	variable and its negation do not appear in the same clause. 
	
	For the sake of contradiction, suppose
	there exists a $C_4$ in $G$, say~$H$. The universal vertex $u$
	cannot belong to $H$. Since $C$ is an independent set, $H$ contains at most
	two vertices of $C$. Now, if $H$ contains two vertices of $C$, then the other
	two vertices of $H$ must be two literals, which contradicts our assumption that
	two distinct clauses have at most one literal in common. Since~$L$
	induces a matching, $H$ is not contained in $L$. Therefore, $H$ contains
	one vertex of $C$ and three vertices of $L$, which by the construction of
	$G$ gives the final contradiction.	
	\end{proof}

\begin{corollary}
\label{cor:checkbicliquecolouring}
Let $G$ be a $\{C_4, K_4\}$-free graph. It is co$\mathcal{NP}$-complete to check
if a colouring of the vertices of $G$ is a biclique-colouring.
\end{corollary}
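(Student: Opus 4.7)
The plan is to establish membership in co$\mathcal{NP}$ and then reduce from {\sc Biclique Containment} to the complement of the verification problem, leveraging Theorem~\ref{thm:bicliquecoloracaoinvalidanpcompleto} as a black box. For membership, a ``no'' certificate is simply a monochromatic biclique of~$G$: it has polynomial size, and one can check in polynomial time that a given vertex subset induces a complete bipartite graph with at least one edge, is maximal with this property, and is monochromatic under the given colouring~$\pi$.

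For co$\mathcal{NP}$-hardness, I would reduce from {\sc Biclique Containment} restricted to $\{C_4, K_4\}$-free graphs, which is $\mathcal{NP}$-complete by Theorem~\ref{thm:bicliquecoloracaoinvalidanpcompleto}. Given an instance $(G, V^\prime)$ with $G = (V, E)$ being $\{C_4, K_4\}$-free, I keep the graph~$G$ unchanged and construct a colouring $\pi$ of~$V$ as follows: every vertex of~$V^\prime$ receives the same colour, say colour~$0$, while each vertex $v \in V \setminus V^\prime$ is assigned a private colour distinct from all others. Since every colour class indexed by $V \setminus V^\prime$ is a singleton and every biclique contains at least two vertices, no biclique of~$G$ can be monochromatic in one of those private colours. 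Consequently, $\pi$ admits a monochromatic biclique if and only if there exists a biclique of~$G$ wholly contained in~$V^\prime$, which is precisely the {\sc Biclique Containment} question on $(G, V^\prime)$. Thus $\pi$ is a biclique-colouring of~$G$ if and only if $(G, V^\prime)$ is a \emph{no}-instance of {\sc Biclique Containment}, giving the desired co$\mathcal{NP}$-hardness.

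The main obstacle one might worry about is preserving the $\{C_4, K_4\}$-free restriction through the reduction, but here it comes for free: the reduction does not modify~$G$ at all and only attaches a colouring to its vertices, so the structural hypothesis is inherited directly from Theorem~\ref{thm:bicliquecoloracaoinvalidanpcompleto}. Combining membership with co$\mathcal{NP}$-hardness yields the claimed co$\mathcal{NP}$-completeness even on $\{C_4, K_4\}$-free graphs.
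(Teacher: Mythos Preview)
Your argument is correct and matches the paper's intended reasoning. The paper does not spell out a separate proof for this corollary; it treats the result as immediate from Theorem~\ref{thm:bicliquecoloracaoinvalidanpcompleto} together with the observation made just before the problem definition that a colouring fails to be a biclique-colouring precisely when some colour class contains a biclique, which is exactly the reduction you make explicit.
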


\section{Powers of paths, powers of cycles, and their bicliques}
\label{sec:powerofcyclesandbicliques}
A \emph{power of a path} $P_n^{k}$, for $k \geq 1$, is a simple
graph with $V(G)= \{v_0,\dots, v_{n-1}\}$ and $\{v_i,v_j\}\in E(G)$ if,
and only if, $|i-j| \leq k$. Note that $P_{n}^{1}$ is the induced path $P_n$ on
$n$ vertices and $P_{n}^{k}$, $n \leq k + 1$, is the complete graph $K_{n}$ on $n$
vertices. 
In a power of a path $P_n^k$, the \emph{reach} of an edge $\{v_i,v_j\}$ is $|i -
j|$. A \emph{power of a cycle} $C_n^{k}$, for $k \geq 1$, is a
simple graph with $V(G)= \{v_0,\dots, v_{n-1}\}$ and $\{v_i,v_j\}\in E(G)$ if,
and only if, $\min\{(j-i)\bmod n,(i-j)\bmod n\} \leq k$. Note that $C_{n}^{1}$
is the induced cycle $C_n$ on $n$ vertices and $C_{n}^{k}$, $n \leq 2k + 1$,
is the complete graph $K_{n}$ on $n$ vertices. 
In a power of a cycle $C_n^k$, we take $(v_0,\dots,v_{n-1})$ to be a
\emph{cyclic order} on the vertex set of $G$ and we always perform arithmetic modulo~$n$ on
vertex indices. The \emph{reach} of an edge $\{v_i, v_j\}$ is $\min\{(i-j)\bmod
n, (j-i)\bmod n\}$. The definition of reach is extended to an induced path to be
the sum of the reach of its edges. A \emph{block} is a maximal set of
consecutive vertices. The \emph{size} of a block is the number of vertices in
the block.

All power graphs considered in the present work contain a polynomial
number of bicliques, a sufficient condition for the {\sc Biclique Containment}
problem to be polynomial. In what follows, we explicitly identify the bicliques
of a power of a path and the bicliques of a power of a cycle. 
We say that a biclique of size~2 is a $P_2$ biclique and that a biclique of size~3
is a $P_3$ biclique. Notice that, for each value of $n$ in the considered
range, every biclique in
Lemmas~\ref{lem:powerofpathsbicliques}~and~\ref{lem:powerofcyclesbicliques} always exists.
We refer to Figure~\ref{fig:l4} to illustrate the distinct biclique structures
for each considered case of non-complete powers of cycles.

\begin{figure}[t]
\centering
	\subfloat
		[Power of a cycle $C_{11}^4$ \newline ($2k + 2 \leq n \leq 3k + 1$)] {
			\includegraphics[scale=0.17]{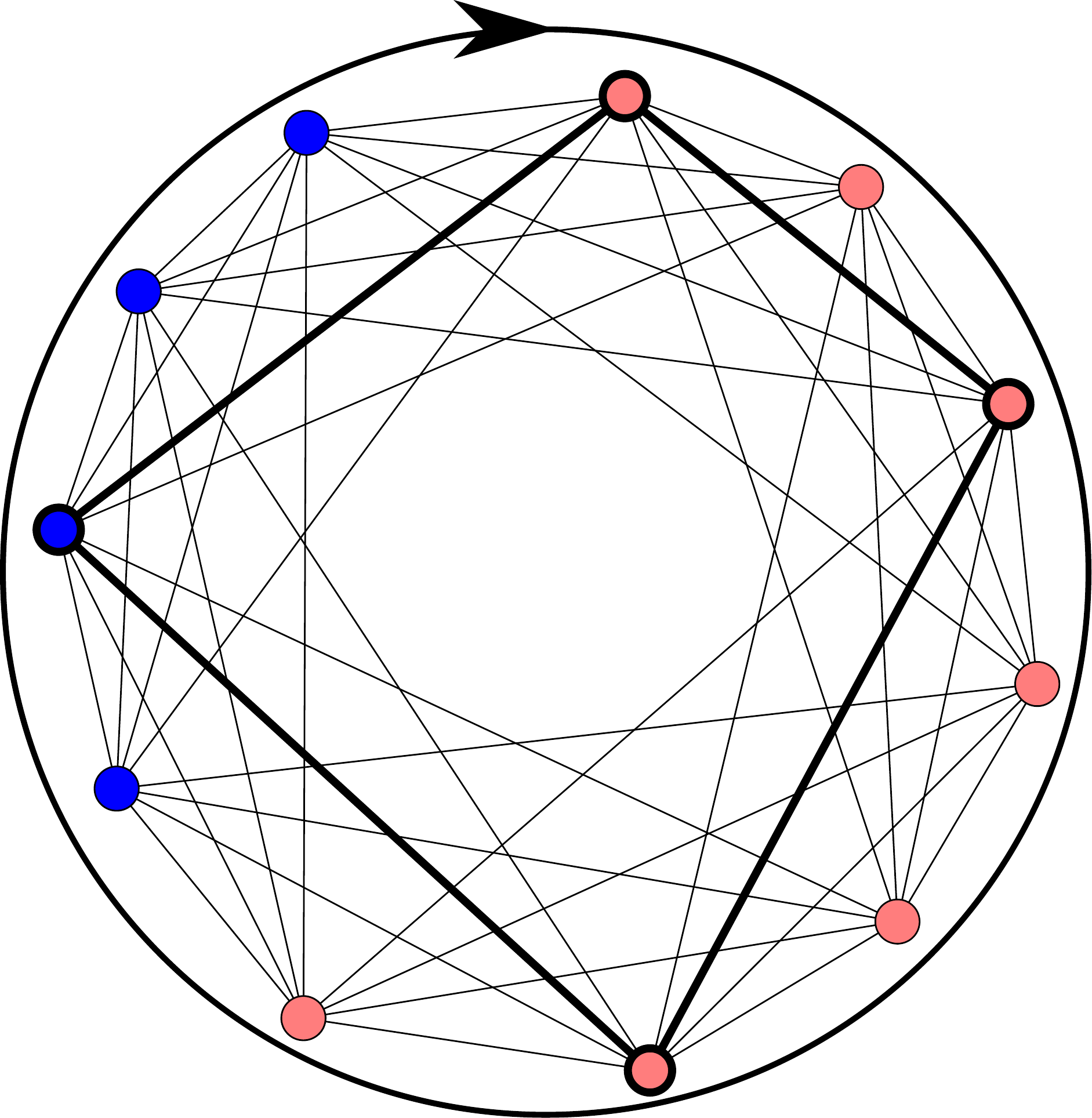}
			\label{fig:c114-l4}
		}
	\qquad
	\subfloat
		[Power of a cycle $C_{11}^3$ \newline ($3k + 2 \leq n \leq 4k$)] {
			\includegraphics[scale=0.17]{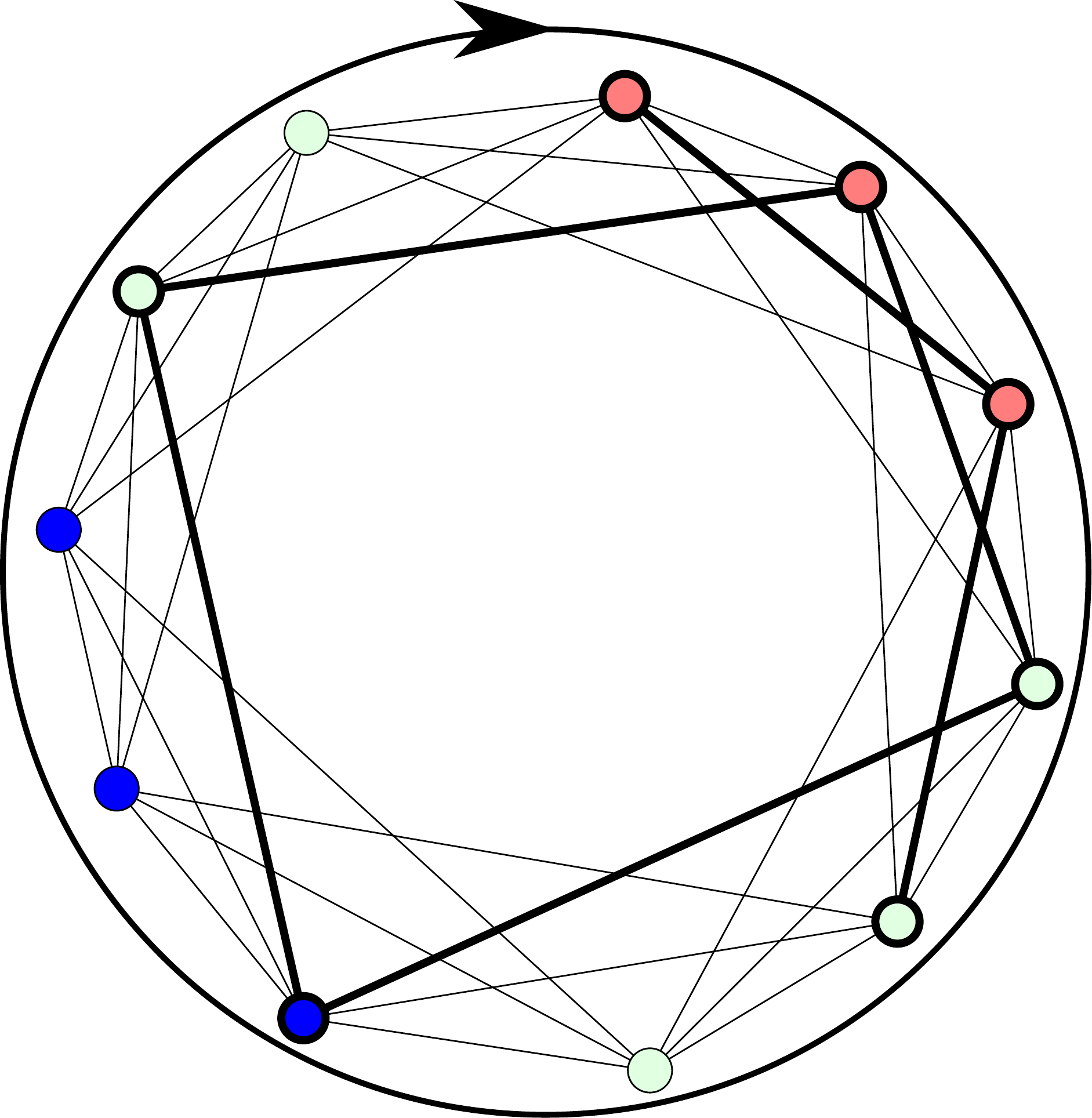}
			\label{fig:c113-l4}
		}
	\qquad
	\subfloat
		[Power of a cycle $C_{11}^2$ \newline ($n \geq 4k + 1$)] {
			\includegraphics[scale=0.17]{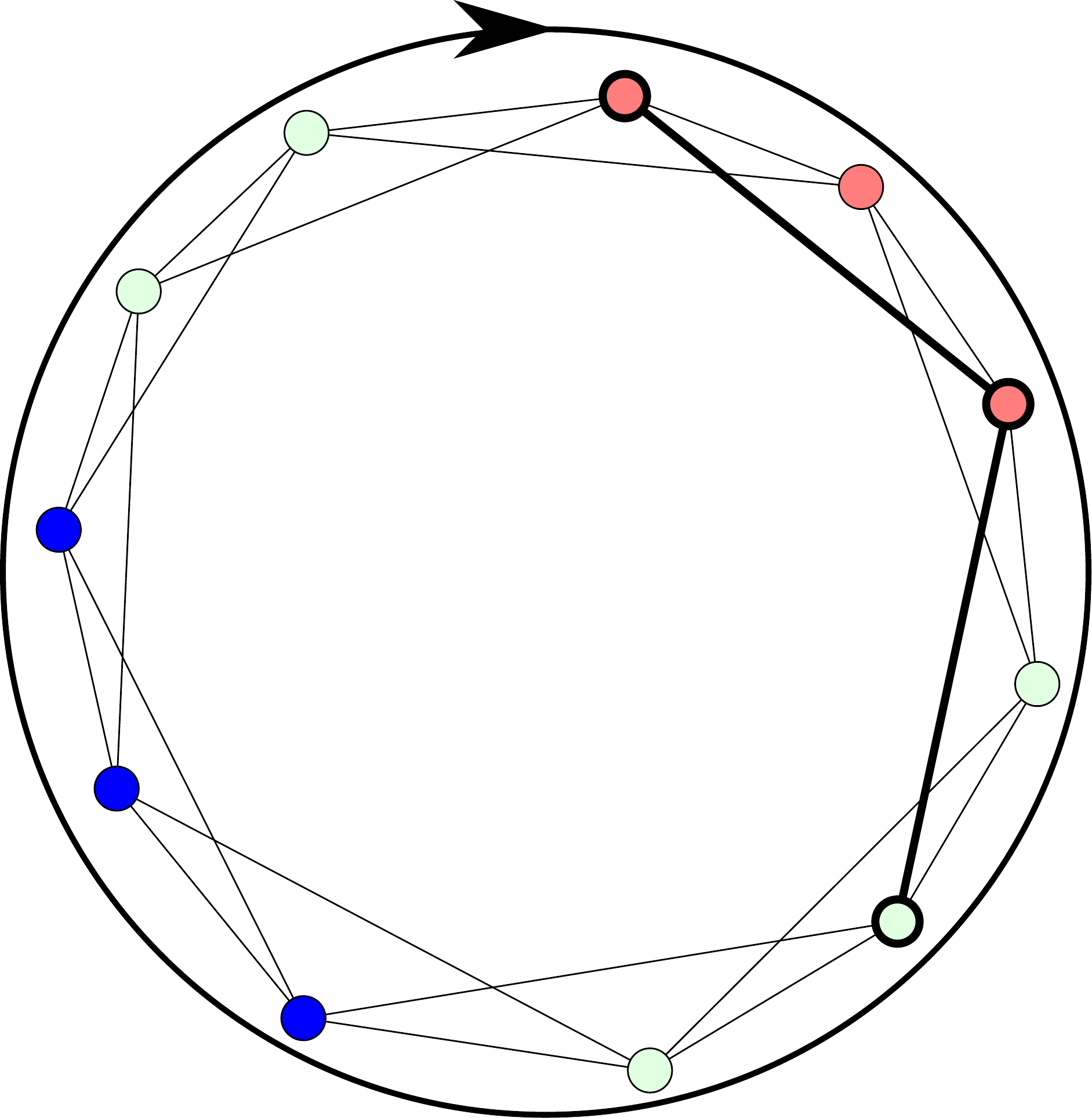}
			\label{fig:c112-l4}
		}
	\caption{For each case of non-complete powers of
	cycles according to Lemma~\ref{lem:powerofcyclesbicliques}, we highlight in
	bold the distinct biclique structures.}
	\label{fig:l4}
\end{figure}
	
\begin{lemma}
\label{lem:powerofpathsbicliques}
 The bicliques of a power of a path $P_n^k$ are precisely: 
 $P_{2}$ bicliques, if $n \leq k + 1$;
 $P_{2}$ bicliques and $P_{3}$ bicliques, if $k + 2 \leq n \leq 2k$; and
 $P_{3}$ bicliques if $n \geq 2k + 1$.
\end{lemma}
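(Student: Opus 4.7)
The plan is to first prove a structural reduction (every biclique of $P_n^k$ is either an edge or an induced $P_3$) and then perform a three-way case analysis on $n$ relative to $k$ to determine which of these structures are maximal. For the structural reduction, I would first show that $P_n^k$ contains no induced $K_{2,2}$. Suppose for contradiction that $\{v_{a_1}, v_{a_2}\}$ and $\{v_{b_1}, v_{b_2}\}$, with $a_1 < a_2$, $b_1 < b_2$, and WLOG $a_1 < b_1$, form the two parts of such a subgraph. The non-edge inside the second part forces $b_2 > b_1 + k$, and combined with $b_1 \geq a_1 + 1$ this yields $b_2 - a_1 > k$, contradicting the required edge $v_{a_1}v_{b_2}$. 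Hence every biclique is a star $K_{1,t}$. Next, the neighborhood of any vertex $v_c$ splits into the two cliques $\{v_{c-k},\ldots,v_{c-1}\}$ and $\{v_{c+1},\ldots,v_{c+k}\}$ (each a window of $k$ consecutive vertices), so any independent set of neighbors of $v_c$ has size at most $2$, forcing $t \leq 2$. Therefore every biclique of $P_n^k$ is either a $P_2$ or a $P_3$ biclique.

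With the reduction in hand, I would split into three cases. If $n \leq k+1$ then $P_n^k = K_n$ contains no induced $P_3$, and every edge is a maximal bipartite subgraph (every three-vertex induced subgraph is a triangle), so the bicliques are exactly the $P_2$ bicliques. If $n \geq 2k+1$, I would show that every edge $v_iv_j$ with $i < j$ extends to an induced $P_3$: if $j \geq k+1$ then $v_{j-k-1}$ is adjacent to $v_i$ but not to $v_j$, and otherwise $j \leq k$ forces $i \leq k-1$, in which case $v_{i+k+1}$ lies in $V$ since $i+k+1 \leq 2k \leq n-1$ and witnesses the extension. Hence no $P_2$ biclique survives, while an explicit $P_3$ biclique such as $\{v_0, v_k, v_{2k}\}$ is directly checked to be maximal using $K_{2,2}$-freeness together with the observation that no vertex is adjacent to $v_k$ but simultaneously non-adjacent to both $v_0$ and $v_{2k}$. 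For the intermediate range $k+2 \leq n \leq 2k$, the same extension analysis pins down exactly which edges are maximal: $v_iv_j$ is a biclique iff $j \leq k$ (blocking the left extension) and $i \geq n-k-1$ (blocking the right extension), and the interval of valid $i$ is non-empty precisely when $n \leq 2k$; meanwhile, the $P_3$ biclique $\{v_0, v_1, v_{k+1}\}$ exists because $n \geq k+2$, and its maximality follows from the same structural constraints.

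The main obstacle I anticipate is the careful bookkeeping in the edge-extension argument: whether an index $i-k$, $j-k-1$, $i+k+1$, or $j+k$ lands inside $[0, n-1]$ controls which extensions are available, and the boundary vertices $v_0$ and $v_{n-1}$ require dedicated attention. Setting up a clean characterization of extendable edges here pays off in Lemma~\ref{lem:powerofcyclesbicliques}, where the analogous argument for powers of cycles follows the same template but must be redone with cyclic arithmetic and a wider parameter range.
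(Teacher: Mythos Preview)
Your proposal is correct and follows essentially the same route as the paper: first reduce to $P_2$/$P_3$ bicliques via $C_4$-freeness and $K_{1,3}$-freeness, then handle the three ranges of $n$ with the same edge-extension argument (the paper uses $v_{j+k}$ or $v_{j-(k+1)}$ where you use $v_{i+k+1}$ or $v_{j-k-1}$, which is the same idea). The only substantive difference is that you actually prove the $K_{2,2}$- and $K_{1,3}$-freeness and give a full characterization of the $P_2$ bicliques in the intermediate range, whereas the paper simply asserts the former and exhibits a single witness (the universal pair $\{v_{n-1-k},v_k\}$) for the latter.
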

\begin{proof}
A power of a path is $K_{1, 3}$-free and $C_4$-free. Thus, the bicliques of a
power of a path are possibly $P_{2}$ or $P_{3}$ bicliques. 

Let $P_{n}^{k}$ be a power of a path with $n
\leq k + 1$. Since $P_n^k = K_n$, every pair of vertices is a $P_2$ biclique.

Let $P_{n}^{k}$ be a power of a path with $k + 2 \leq n \leq
2k$. Since $n > k + 1$ and $k > n - 1 - k$, the edge $\{v_{n-1-k}, v_k\}$ exists
and both vertices $v_{n-1-k}$ and $v_k$ are adjacent to every other vertex of
$P_n^k$. This implies that they define a $P_2$ biclique. Clearly, vertices
$v_0$, $v_k$, and $v_{k+1}$ are distinct and define a $P_3$ biclique.

Now, let $P_{n}^{k}$ be a power of a path with $n \geq
2k + 1$. We claim that always exists only $P_{3}$ biclique. Let $v_{i}$ and
$v_{j}$ be two adjacent vertices in $P_{n}^{k}$, such that $i < j$. If $j \leq
k$, $v_{i}, v_{j}, v_{j + k}$ induce a $P_{3}$, since $v_{i}$ is not adjacent
to $v_{j + k}$. Otherwise $j \geq k + 1$ and $v_{j - (k + 1)}, v_{i}, v_{j}$
induce a $P_{3}$, since $v_{j - (k + 1)}$ is not adjacent to $v_{j}$. We
conclude that every $P_{2}$ is contained in a $P_{3}$, and so every biclique in
$P_{n}^{k}$ is a $P_{3}$ biclique.
\end{proof}

\begin{lemma}
\label{lem:powerofcyclesbicliques}
 The bicliques of a power of a cycle $C_n^k$ are precisely:
  $P_{2}$ bicliques, if $n \leq 2k + 1$;
  $C_4$ bicliques, if $2k + 2 \leq n \leq 3k + 1$;
  $P_{3}$ bicliques and $C_4$ bicliques, if $3k + 2 \leq n \leq 4k$; and
  $P_{3}$ bicliques, if $n \geq 4k + 1$.
\end{lemma}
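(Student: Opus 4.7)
The plan is to first restrict the possible biclique structures and then handle the four ranges of $n$ separately. The key structural observation is that $C_n^k$ is claw-free: the neighbourhood $N(v_0) = \{v_{-k}, \ldots, v_{-1}, v_1, \ldots, v_k\}$ splits (for $n \geq 2k+2$) into two arcs of $k$ consecutive vertices each, and any two vertices in the same arc have cyclic distance at most $k-1$ and are therefore adjacent; so by pigeonhole, three neighbours of $v_0$ cannot be pairwise non-adjacent. Since $K_{2,3}$ contains an induced $K_{1,3}$, claw-freeness forces every biclique of $C_n^k$ to be a $P_2 = K_{1,1}$, a $P_3 = K_{1,2}$, or a $C_4 = K_{2,2}$. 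It then suffices to decide which of these types occur in each range.

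For $n \leq 2k+1$ the graph $C_n^k$ is $K_n$, and any three vertices span a triangle, so only the edges are bicliques. For $n \geq 2k+2$, I would rule out $P_2$ bicliques by noting that for every edge $\{v_i, v_j\}$ (say $j = i + a$ with $1 \leq a \leq k$) the vertex $v_{i+k+1}$ is non-adjacent to $v_i$ (cyclic distance $k+1$) yet adjacent to $v_j$ (cyclic distance $k+1-a \leq k$), so the edge extends to a $P_3$.

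Next, using rotation and reflection, I would parametrise every $P_3$ as $\{v_0, v_a, v_{a+c}\}$ with centre $v_a$, where $1 \leq a, c \leq k$ and $k < a+c < n-k$ (encoding non-adjacency of the leaves). Computing $N(v_0) \cap N(v_{a+c})$ on the cycle produces a ``near'' arc of common neighbours lying between $v_0$ and $v_{a+c}$ through $v_a$, and a ``far'' arc on the opposite side of the cycle, which is non-empty precisely when $a+c \geq n - 2k$. Every near-arc common neighbour lies in $N(v_a)$, so it cannot witness a $C_4$ extension; conversely, whenever the far arc is non-empty it contains at least one vertex at cyclic distance $>k$ from $v_a$, producing a $K_{2,2}$. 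Hence the $P_3$ is a biclique iff $a+c < n-2k$, and such a choice $(a,c)$ with $a+c > k$ and $a, c \leq k$ exists iff $n \geq 3k+2$.

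Finally, four vertices $v_{p_1}, v_{p_2}, v_{p_3}, v_{p_4}$ in cyclic order form a $K_{2,2}$ iff the four consecutive cyclic gaps are all at most $k$ (so $n \leq 4k$) and both diagonal cyclic distances are larger than $k$ (so $n \geq 2k+2$); maximality is automatic by claw-freeness, which blocks any $K_{2,3}$. An explicit, balanced choice of gaps realises such a $C_4$ throughout $2k+2 \leq n \leq 4k$. Combining the three existence criteria gives exactly the four cases stated. I expect the main obstacle to be the modular bookkeeping in the far-arc computation for the $P_3$-to-$C_4$ extension: one has to confirm both that the candidate vertex lies on that arc and that its cyclic distance from $v_a$ exceeds $k$, uniformly in $(a,c)$ and across the whole range $n \geq 2k+2$.
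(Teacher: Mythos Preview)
Your proposal is correct and follows essentially the same route as the paper: both start from claw-freeness to restrict the possible bicliques to $P_2$, $P_3$, or $C_4$, and then decide case by case which of these three types actually occur in each range of $n$. The only cosmetic difference is that the paper exhibits explicit witnesses (the $P_3$ biclique $\{v_0,v_k,v_{k+1}\}$ for $n\geq 3k+2$ and the $C_4$ biclique $\{v_0,v_{\lceil n/4\rceil},v_{\lceil n/2\rceil},v_{\lceil 3n/4\rceil}\}$ for $2k+2\leq n\leq 4k$) rather than your uniform near/far-arc parametrisation; the ``modular bookkeeping'' you flag does go through cleanly, since whenever the far arc is non-empty it always meets the set of vertices at cyclic distance greater than $k$ from the centre $v_a$.
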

\begin{proof}
A power of a cycle is $K_{1, 3}$-free. Thus, the bicliques of a power of a cycle
are possibly $P_{2}$, $P_{3}$ or $C_4$ bicliques. Let $C_{n}^{k}$ be a
power of a cycle with $n \leq 2k + 1$. Since $C_n^k = K_n$, every pair of
vertices is a $P_2$ biclique. Otherwise, $n \geq 2k + 2$, and every $P_{2}$ is
properly contained in a $P_{3}$, as we explain next. Let $v_i$ and $v_j$ be two
adjacent vertices in $C_n^k$ such that $i < j$ (indices are taken modulo $n$). Let $v_\ell$ be
the last consecutive vertex after $v_j$ adjacent to $v_i$ along the cyclic
order. It follows that $v_{\ell + 1}$ is not adjacent to $v_i$ but $v_{\ell +
1}$ is adjacent to $v_j$ and that vertices $v_i$, $v_j$, and $v_{\ell + 1}$ define a
$P_3$. Thus, in what follows, each biclique is possibly $P_{3}$ or $C_4$
biclique.

Let $G$ be a power of a cycle $C_{n}^{k}$ with $2k + 2 \leq n \leq 4k$. 
Since $2k + 2 \leq n \leq 4k$, the subset of vertices $H = \{v_0,
v_{\lceil\frac{n}{4} \rceil}, v_{\lceil\frac{n}{2} \rceil},
v_{\lceil\frac{3n}{4} \rceil}\}$ is a $C_4$ biclique. Hence, $G$ has a $C_4$
biclique.

Let $G$ be a power of a cycle $C_{n}^{k}$ with $n \geq 4k + 1$. 
Suppose $P = \{v_{h}, v_{s}, v_{r}\}$ is a~$P_{3}$. If the missing edge is
$\{v_h, v_r\}$, then, by symmetry, we may assume $h < s < r$.
Since $n \geq 4k + 1$, vertices $v_h$ and $v_{r}$ have no common
neighbor with index at most $h - 1$ and at least $r + 1$. Hence, $G$ does not
have a $C_4$ biclique.

Let $G$ be a power of a cycle $C_{n}^{k}$ with $2k + 2 \leq n \leq 3k+1$.
Suppose $P^\prime = \{v_{h^\prime}, v_{s^\prime}, v_{r^\prime}\}$ is a $P_{3}$.
If the missing edge is $\{v_{h^\prime}, v_{r^\prime}\}$, then, by symmetry, we may
assume $h^\prime < s^\prime < r^\prime$. Since $2k + 2 \leq n \leq 3k + 1$,
vertices $v_{h^\prime}$ and $v_{r^\prime}$ have a common neighbor with index at most
$h^\prime - 1$ and at least $r^\prime + 1$ which is not a neighbor of
$v_{s^\prime}$. We conclude that every $P_3$ is contained in a $C_4$ and 
therefore $G$ contains only $C_4$ biclique.

Now, let $G$ be a power of a cycle $C_{n}^{k}$ with $n \geq 3k + 2$. Consider
the $P_{3}$ induced by vertices $v_0$, $v_k$, and $v_{k+1}$. Since $n \geq 3k +
2$, vertices $v_0$ and $v_{k+1}$ have no common neighbor with index at least
$k+2$. Hence, $G$ has a $P_3$ biclique.
\end{proof}

\section{Determining the biclique-chromatic number of~$P_{n}^{k}$}
\label{sec:kappabpowerofpath}

The extreme cases are easy to compute: the densest case occurs when
$n \leq k + 1$, which implies that a power of a path $P_n^k$ is the complete
graph $K_n$ whose biclique-chromatic number is its order $n$, whereas for the non-complete case, the
sparsest case $P_n^k$ occurs when $k = 1$, which implies that a power of a path
$P_n^k$ is the chordless path $P_n$ whose biclique-chromatic number is 2.
According to Lemma~\ref{lem:powerofpathsbicliques}, we consider other two cases:
the less dense case $n \in [k + 2, 2k]$, and the sparse case $n \in [2k + 1,
\infty)$. The proof of
Theorem~\ref{thm:kappabpowerofpathfirstinterval} (resp.
Theorem~\ref{thm:kappabpowerofpathsecondinterval})
additionally yields an efficient $2k+2-n$-biclique-colouring (resp.
2-biclique-colouring) algorithm for the less dense case (resp. for the sparse
case).

\begin{theorem}
\label{thm:kappabpowerofpathfirstinterval}
 A power of a path $P_n^k$, when $k + 2 \leq n \leq 2k$, has
 biclique-chromatic number~$2k + 2 - n$.
\end{theorem}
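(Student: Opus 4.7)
The plan is to prove both bounds by first identifying the structural feature of $P_n^k$ in the range $k+2 \le n \le 2k$ that forces many colours, namely the set of universal vertices. A vertex $v_\ell$ is universal in $P_n^k$ iff $\ell - 0 \le k$ and $(n-1) - \ell \le k$, i.e.\ iff $n-1-k \le \ell \le k$; counting gives exactly $2k+2-n$ universal vertices. I will use this count on both sides of the equality.

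For the lower bound, I would show that every pair $\{v_a,v_b\}$ of universal vertices is a $P_2$ biclique. Since $v_a$ and $v_b$ are universal, every other vertex is adjacent to both, so the edge $\{v_a,v_b\}$ cannot be extended to a $K_{1,2}$. Moreover, extending to a $K_{2,2}$ biclique would require $v_av_b$ to be a non-edge, which fails. By Lemma~\ref{lem:powerofpathsbicliques} these are the only two ways a $P_2$ could sit inside a larger biclique, so $\{v_a,v_b\}$ is maximal and thus a biclique. Because no two of the $2k+2-n$ universal vertices may share a colour, $\kappa_B(P_n^k) \ge 2k+2-n$.

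For the upper bound I would exhibit a $(2k+2-n)$-biclique-colouring. Assign distinct colours $1,2,\dots,2k+2-n$ to the universal vertices $v_{n-1-k},\dots,v_k$; then colour all ``low'' non-universal vertices $v_0,\dots,v_{n-k-2}$ with colour $1$ and all ``high'' non-universal vertices $v_{k+1},\dots,v_{n-1}$ with colour $2$ (colour $2$ is available since $2k+2-n\ge 2$). Pairs of universal vertices (the $P_2$ bicliques) get distinct colours by construction. For any $P_3$ biclique $\{v_a,v_c,v_b\}$ with $a<b$ and $b-a \ge k+1$, the constraints $a \le n-k-2$ and $b \ge k+1$ force $v_a$ to be low and $v_b$ to be high, so these two leaves already receive colours $1$ and $2$ and the triple cannot be monochromatic regardless of the colour of $v_c$.

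The main obstacle is the characterisation step: verifying that the $P_2$ bicliques of $P_n^k$ in this range are precisely the pairs of universal vertices, and in particular checking that no $P_2$ biclique can be extended to a $K_{2,2}$ biclique (which Lemma~\ref{lem:powerofpathsbicliques} does not explicitly rule out). Once that structural fact is established, the lower bound is immediate and the proposed colouring is a straightforward verification, yielding both the claimed value and a linear-time algorithm.
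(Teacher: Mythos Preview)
Your proof is correct and follows essentially the same approach as the paper: both identify the $2k+2-n$ universal vertices $v_{n-1-k},\ldots,v_k$ to force the lower bound via pairwise $P_2$ bicliques, and both exhibit (up to a relabelling of colours) the same block-colouring for the upper bound, with the paper verifying it by showing monochromatic triples induce triangles while you verify it by showing the two leaves of any $P_3$ biclique land in opposite non-universal blocks. Your stated ``main obstacle'' is not one, since Lemma~\ref{lem:powerofpathsbicliques} already asserts that the bicliques in this range are \emph{precisely} $P_2$'s and $P_3$'s (so no $K_{2,2}$ arises), and maximality of a universal pair is immediate because every third vertex is adjacent to both endpoints.
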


\begin{proof}
Let $G$ be a power of a path $P_{n}^{k}$ with $k + 2 \leq n \leq 2k$.
Each of the vertices $v_{n-1-k}, \ldots, v_k$ is universal and any pair of
vertices in $\{v_{n-1-k},\ldots,v_k\}$ induces a $P_{2}$ biclique in the
graph. Hence, we are forced to give distinct colours to each of the vertices
$v_{n-1-k}, \ldots, v_k$ and we have $\kappa_{B}(G) \geq 2k + 2 - n$.
 
We define $\pi:V(G)\rightarrow\{1, \ldots, 2k + 2 - n\}$ by giving
(arbitrarily) distinct colours $3, \ldots, 2k + 2 - n$ to vertices $v_{n - k},
\ldots, v_{k - 1}$. Now, use colour $1$ in the uncoloured vertices before $v_{n
- k}$ and colour $2$ in the uncoloured vertices after $v_{k - 1}$. 
Every monochromatic edge contains either both end vertices before $v_{n - k}$ or
both end vertices after $v_{k - 1}$. By symmetry, consider $\{v_i, v_j\}$ a
monochromatic edge such that $i < j < n - k$. Now, vertices $v_i, v_j, v_{j +
k}$ induce a $P_3$ biclique. Since any choice of three vertices either before
$v_{n - k}$ or after $v_{k - 1}$ defines a triangle, $\pi$ is a
biclique-colouring of $G$.

We refer to Figure~\ref{fig:path1} to illustrate the
given $(2k + 2 - n)$-biclique-colouring.
\end{proof}

\begin{theorem}
\label{thm:kappabpowerofpathsecondinterval}
 A power of a path $P_n^k$, when $n \geq 2k + 1$, has
 biclique-chromatic number~2.
\end{theorem}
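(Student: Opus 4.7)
The plan is to prove the matching lower and upper bounds separately. The lower bound $\kappa_B(P_n^k) \geq 2$ is immediate: since $n \geq 2k + 1 \geq 2$ the graph contains at least one edge and hence, by Lemma~\ref{lem:powerofpathsbicliques}, at least one ($P_3$) biclique, so no one-colour assignment can be a biclique-colouring. The substantive content is the upper bound, for which I would exhibit an explicit 2-biclique-colouring by partitioning the vertex set into consecutive \emph{blocks} of size $k + 1$ and alternating colours across blocks: define $\pi(v_i) = 1$ if $\lfloor i/(k+1) \rfloor$ is even, and $\pi(v_i) = 2$ otherwise.

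To verify that $\pi$ is a biclique-colouring, I would invoke Lemma~\ref{lem:powerofpathsbicliques} again: for $n \geq 2k + 1$ every biclique is a $P_3$ biclique, i.e., a triple $v_a, v_b, v_c$ with $a < b < c$, $b - a \leq k$, $c - b \leq k$, and $c - a \geq k + 1$. Assume for contradiction that such a triple is monochromatic under $\pi$, and set $j = \lfloor a/(k+1) \rfloor$, $j' = \lfloor c/(k+1) \rfloor$. Since within a single block of size $k + 1$ every two vertices are adjacent, the condition $c - a \geq k + 1$ forces $j' \geq j + 1$; and since $v_a$ and $v_c$ share a colour, $j' - j$ is even, so in fact $j' \geq j + 2$. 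On the other hand, $c - a \leq (c - b) + (b - a) \leq 2k < 3(k+1)$, so $j' \geq j + 4$ is impossible. Hence $j' = j + 2$ is the only case to analyse.

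In this remaining case, $b$ also belongs to an even-indexed block lying (weakly) between $a$ and $c$; the only possibilities are block $j$ or block $j + 2$, since block $j + 1$ carries the opposite colour. Using the endpoint bounds $a \leq (j+1)(k+1) - 1$ and $c \geq (j+2)(k+1)$, a short arithmetic check shows that putting $b$ in block $j$ forces $c - b \geq k + 2$, and putting $b$ in block $j + 2$ forces $b - a \geq k + 2$, each contradicting the adjacency constraints on the induced $P_3$. This rules out a monochromatic biclique and completes the proof, simultaneously yielding a linear-time 2-biclique-colouring algorithm. The main ``obstacle'' is bookkeeping the block arithmetic cleanly; there is no deep difficulty, since the block size $k + 1$ is chosen precisely so that the diameter $2k$ of any $P_3$ biclique cannot span two blocks of the same colour.
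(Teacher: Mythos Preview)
Your proof is correct and follows essentially the same strategy as the paper: partition the vertices into consecutive monochromatic blocks with alternating colours, then argue via Lemma~\ref{lem:powerofpathsbicliques} that every $P_3$ biclique must straddle two consecutive (hence differently coloured) blocks. The only difference is cosmetic: the paper takes blocks of size $k$ (with a final short block of size $t$, where $n = ak + t$), whereas you take blocks of size $k+1$; both choices make each block a clique and force the endpoints of any $P_3$ biclique into distinct blocks, and your case analysis for $j' = j+2$ is exactly the verification the paper leaves implicit in its one-line justification.
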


\begin{proof}
Let $G$ be a power of a path $P_{n}^{k}$ with $n \geq 2k + 1$. 
Let $n = ak + t$, with $0 \leq t < k$. We define
$\pi:V(G)\rightarrow\{blue,red\}$ as follows. A number of $a$
monochromatic-blocks of size $k$ switching colours \emph{red} and \emph{blue}
alternately, followed by a monochromatic-block of size $t$ with \emph{red}
colour if $a$ is even or \emph{blue} colour if $a$ is odd.
We refer to Figure~\ref{fig:path2} to illustrate the given
2-biclique-colouring.

Lemma~\ref{lem:powerofpathsbicliques} says that every biclique of $G$ is a
$P_{3}$. Thus, every biclique is polychromatic, since it contains vertices from
two consecutive monochromatic-blocks (with distinct colours by the
given colouring).
\end{proof}

\begin{figure}[t]
\centering
	\subfloat
		[$(2k + 2 - n)$-biclique-colouring, when $k + 1 \leq n \leq 2k$.] {
			\includegraphics[scale=0.34]{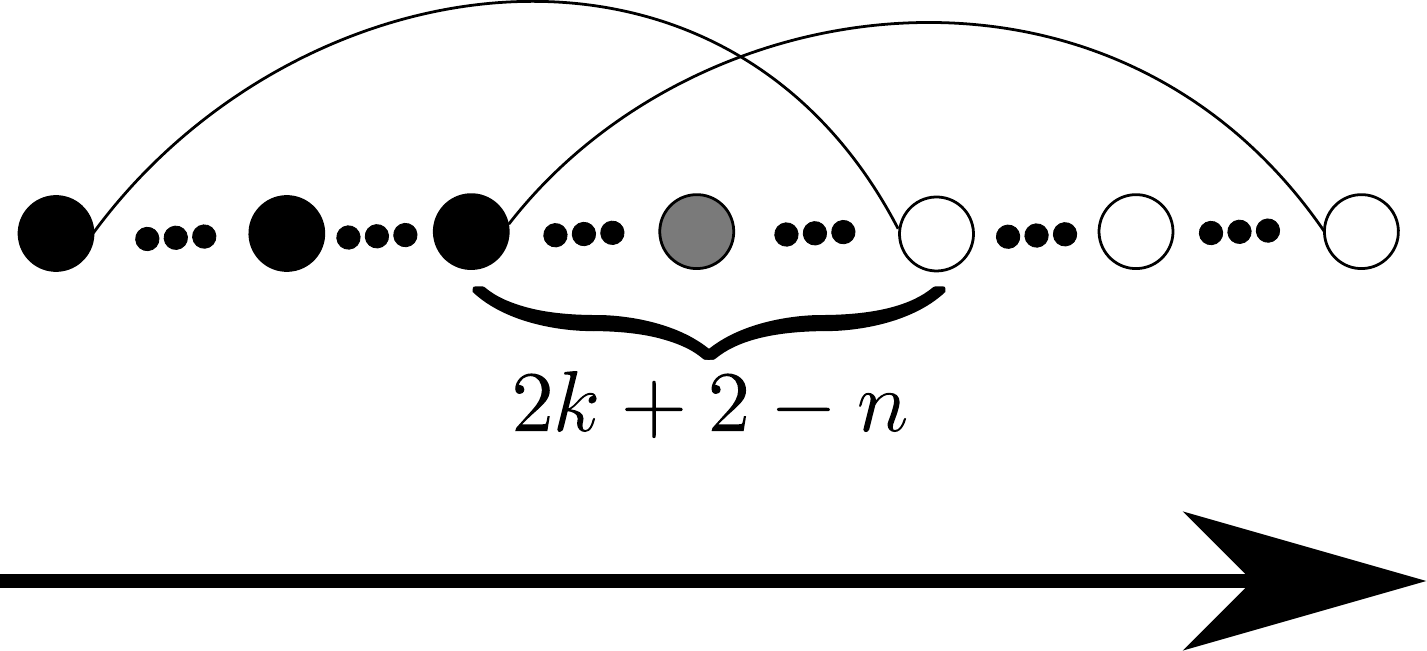}
			\label{fig:path1}
		}
	\qquad
	\subfloat
		[$2$-biclique-colouring, when $n \geq 2k + 2$ and $0 \leq t
		< k$.] {
			\includegraphics[scale=0.34]{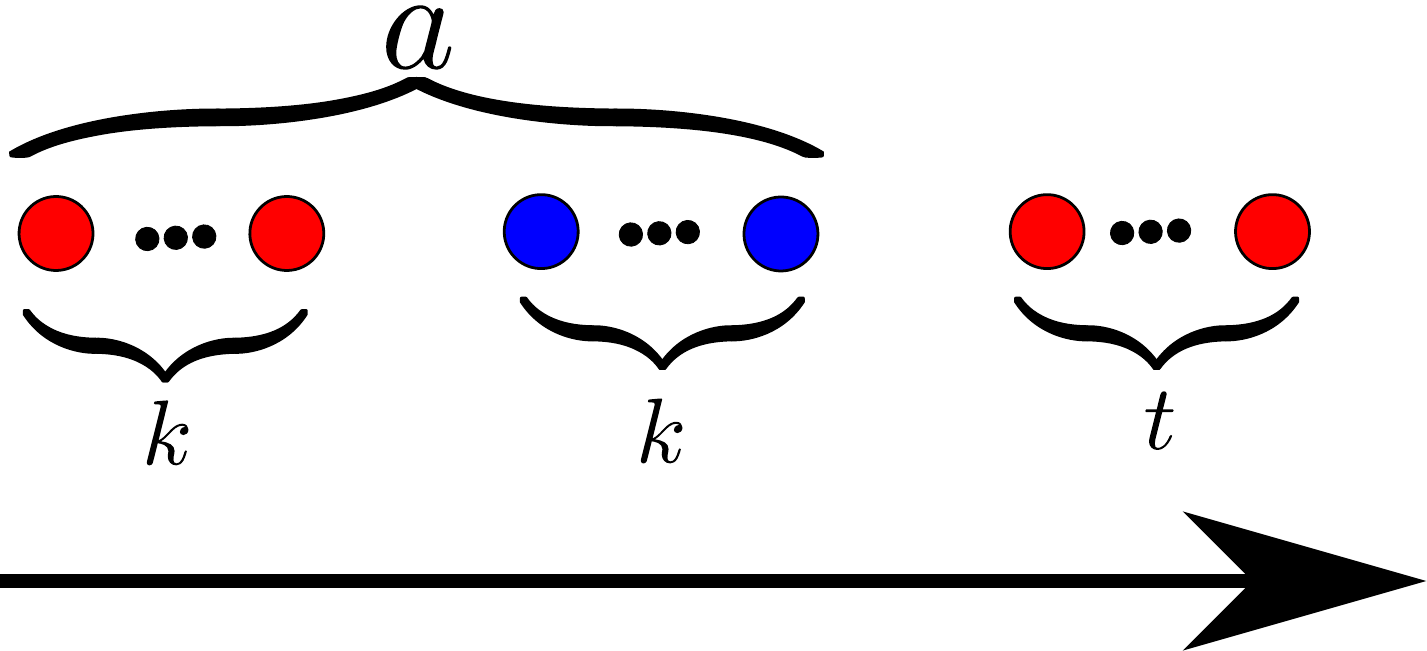}
			\label{fig:path2}
		}
	\caption{Biclique-colouring of
	powers of paths}
	\label{fig:pathbicliquecolouring}
\end{figure}

\section{Determining the biclique-chromatic number of~$C_{n}^{k}$}
\label{sec:bicliqueupperboundpowerofcycle}

The extreme cases are easy to compute: the densest case occurs when
$n \leq 2k + 1$, which implies that a power of a cycle $C_n^k$ is the complete
graph $K_n$ whose biclique-chromatic number is its order $n$, whereas for the
non-complete case, the sparsest case $C_n^k$ occurs when $k = 1$, which implies
that a power of a cycle $C_n^k$ is the chordless cycle $C_n$ whose
biclique-chromatic number is 2.
According to Lemma~\ref{lem:powerofcyclesbicliques}, we consider other two
cases:
the less dense case $n \in [2k + 2, 3k + 1]$, whose biclique-chromatic number is
2, and the sparse case $n \in [3k + 2, \infty)$.

The division algorithm says that any
natural number $a$ can be expressed using the equation $a = bq + t$, with a
requirement that $0 \leq t < b$. We shall use the following version where $b$ is
even and $0 \leq t < 2k$.

\begin{theorem}[Division algorithm]
\label{thm:division}
Given two natural numbers $n$ and $k$,
with $n \geq 2k$, there exist unique natural numbers $a$ and $t$ such that
$n = ak + t$, $a \geq 2$ is even, and $0~\leq~t~<~2k$.
\end{theorem}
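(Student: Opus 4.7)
The plan is to recognize that this is simply the standard division algorithm applied with divisor $2k$ rather than $k$, then repackage the quotient. First I would invoke the classical division algorithm on $n$ divided by $2k$: there exist unique natural numbers $q$ and $t$ with $n = q(2k) + t$ and $0 \leq t < 2k$. Setting $a = 2q$ makes $a$ automatically even, and the hypothesis $n \geq 2k$ forces $q \geq 1$, so $a \geq 2$. Rewriting gives $n = ak + t$ with the desired properties.

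For uniqueness, I would suppose two such representations $n = a_1 k + t_1 = a_2 k + t_2$ with $a_1, a_2$ even, at least $2$, and $0 \leq t_1, t_2 < 2k$. Writing $a_i = 2 q_i$, the equations become $n = q_i(2k) + t_i$ with $q_i \geq 1$ and $0 \leq t_i < 2k$, so the standard division algorithm's uniqueness clause yields $q_1 = q_2$ and $t_1 = t_2$, hence $a_1 = a_2$.

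There is essentially no obstacle here: the statement is a trivial reformulation of Euclidean division once one views the divisor as $2k$ and parameterizes the quotient as $a/2$. The only thing to be careful about is making sure the lower bound $n \geq 2k$ is used precisely to guarantee $a \geq 2$ (rather than just $a \geq 0$), which is what distinguishes this formulation from the raw division algorithm.
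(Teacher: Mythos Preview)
Your argument is correct. The paper does not actually supply a proof of this theorem; it simply states it as a variant of the standard division algorithm and uses it. Your reduction to Euclidean division with divisor $2k$, setting $a = 2q$, is exactly the natural justification, and your uniqueness argument is sound.
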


Given a non-complete power of a cycle, Lemma~\ref{lem:3colouringnomonoP3} shows
that there exists a 3-colouring of its vertices such that no $P_3$ is
monochromatic. Since every biclique contains a $P_3$, 
Lemma~\ref{lem:3colouringnomonoP3} provides an upper bound of~3 for the 
biclique-chromatic number of a power of a cycle --- the proof of 
Lemma~\ref{lem:3colouringnomonoP3} additionally yields an efficient 
3-biclique-colouring algorithm using the version of the division 
algorithm stated in Theorem~\ref{thm:division}. 
Moreover, this upper bound of~3 to the biclique-chromatic number 
is tight. Please refer to Figure~\ref{fig:c113} for an example of a graph 
not 2-biclique-colourable.

\begin{lemma}
\label{lem:3colouringnomonoP3}
Let $G$ be a power of a cycle $C_n^k$, where $n \geq 2k + 2$. Then, $G$ admits a
3-colouring of its vertices such that $G$ has \textbf{no} monochromatic $P_3$.
\end{lemma}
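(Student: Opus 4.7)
The plan is to invoke Theorem~\ref{thm:division} to write $n = ak + t$ with $a \geq 2$ even and $0 \leq t < 2k$, and then construct an explicit 3-colouring of $V(G)$ whose colour classes consist of unions of cyclically consecutive blocks of size at most~$k$. The starting observation is that any block of size at most~$k$ induces a clique in $C_n^{k}$, so three vertices of one block cannot form a $P_3$: any monochromatic $P_3$ must therefore span at least two distinct same-coloured blocks, which is what I will prevent by construction.

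For the construction I would colour the first $ak$ positions with $a$ alternating size-$k$ blocks in colours~$1$ and~$2$ (vertex $v_i$ gets colour~$1$ if $\lfloor i/k \rfloor$ is even and colour~$2$ otherwise); since $a$ is even, the alternation wraps consistently. The remaining $t$ vertices are handled by two cases: if $t \leq k$, I would assign colour~$3$ to all of $v_{ak}, \ldots, v_{n-1}$ as a single block of size~$t$; if $k < t < 2k$, I would use a colour-$3$ block of size exactly~$k$ on $v_{ak}, \ldots, v_{(a+1)k - 1}$ followed by a colour-$2$ block of size $t - k < k$ on the remaining vertices. Every block would then have size at most~$k$, and consecutive blocks would always carry different colours.

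The verification will rest on the geometric fact that, for endpoints $v_h, v_r$ of a $P_3$ at cyclic distance $d \in [k+1, 2k]$, every common neighbour of $v_h$ and $v_r$ lies inside a short-arc window of $2k - d + 1$ positions strictly between them, and---only when $n \leq 4k$---possibly also inside a long-arc window of $2k - (n-d) + 1$ positions. I would then check, block by block, that whenever two same-coloured blocks are separated along each cyclic arc by at least $k$ vertices of different colours, both potential windows fall entirely inside those different-coloured separating blocks; a short case analysis would cover consecutive same-colour blocks within the alternating part, same-coloured blocks sitting across the wrap-around from each other, and (when $t > k$) the interaction between the overflow colour-$2$ block and the alternating colour-$2$ blocks on both sides.

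The main obstacle is exactly the range $k + 2 \leq t < 2k$: a naive colouring that simply paints all $t$ trailing vertices with colour~$3$ would fail because $v_{ak}$ and $v_{ak + t - 1}$ would sit at cyclic distance $t - 1 \in [k+1, 2k-2]$ with every common neighbour trapped inside the colour-$3$ block. Capping the colour-$3$ block at size exactly~$k$ and pushing the excess into a colour-$2$ block will remove that obstruction, but then one still has to show that this new colour-$2$ block cannot combine with an alternating colour-$2$ block to create its own monochromatic $P_3$; I expect this verification to hinge on the fact that the overflow block is flanked on one side by the full colour-$3$ block of size~$k$ and on the other side (via the wrap) by the first colour-$1$ block of size~$k$, which is just enough to absorb any common-neighbour window in either direction.
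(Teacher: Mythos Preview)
Your construction is correct and essentially the paper's approach: for $0 \le t \le k$ your colouring is identical to theirs, and for $k < t < 2k$ you use the block pattern $1,2,1,2,\ldots,1,2,3,2$ while the paper uses $1,2,\ldots,1,3,2,3$; both arrange $a+2$ blocks so that consecutive blocks have different colours and every block has size at most $k$.

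Where you diverge from the paper is in the verification, and here you are making it much harder than necessary. You plan a case analysis of the short- and long-arc common-neighbour windows for each pair of same-coloured blocks. The paper's argument is a single line: given three vertices of the same colour, either all three lie in one block (and then they induce a triangle since every block has size at most $k$), or at least two of them lie in distinct same-coloured blocks. In both constructions every two same-coloured blocks are separated on \emph{each} cyclic arc by at least $k$ vertices of other colours, so any two such vertices are at cyclic distance at least $k+1$ and hence non-adjacent; the three vertices then induce either an edge plus an isolated vertex or an independent triple, never a $P_3$. This observation replaces your entire window analysis.

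Concretely, in your $k<t<2k$ colouring the colour-$2$ blocks are blocks $2,4,\ldots,a$ and the overflow block, and one checks directly that between any two of them the gaps on both arcs are at least $k$ (the short gaps are exactly the colour-$1$ or colour-$3$ block of size $k$ sitting between them). Once you note this, the ``main obstacle'' you describe disappears without any further case analysis.
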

\begin{proof}
Let $G$ be a power of a cycle $C_{n}^{k}$ with $n \geq 2k + 2$. 
Theorem~\ref{thm:division} says that $n = ak + t$
for natural numbers $a$ and $t$, $a \geq 2$ is even, and $0 \leq t < 2k$. If $0
\leq t \leq k$, we define $\pi:V(G)\rightarrow\{blue,red,green\}$ as follows. An even number
$a$ of monochromatic-blocks of size $k$ switching colours \emph{red} and
\emph{blue} alternately, followed by a monochromatic-block of size $t$ with
colour \emph{green}. Otherwise, i.e. $k < t < 2k$, we define
$\pi:V(G)\rightarrow\{blue,red,green\}$ as follows. An odd number $a+1$ of
monochromatic-blocks of size $k$ switching colours \emph{red} and \emph{blue}
alternately, followed by a monochromatic-block of size $k$ with colour
\emph{green}, a monochromatic-block of size $k$ with colour \emph{blue}, and a
monochromatic-block of size $t - k$ with colour \emph{green}.  We refer to
Figure~\ref{fig:restinho} to illustrate the former 3-biclique-colouring and to
Figure~\ref{fig:restao} to illustrate the latter 3-biclique-colouring.

Consider any three vertices $v_i$, $v_j$ and $v_\ell$ with the
same colour. Then, either they are in the same monochromatic-block --- and
induce a triangle --- or two of them are not in consecutive
monochromatic-blocks -- and induce a disconnected graph. In both cases, $v_i$,
$v_j$ and $v_\ell$ do not induce a $P_3$. 
\end{proof}

\begin{theorem}
\label{thm:bicliqueupperboundpowerofcycle}
A power of a cycle $C_n^k$, when $n \geq 2k + 2$,
has biclique-chromatic number at most~3.
\end{theorem}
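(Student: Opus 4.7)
The plan is to obtain the upper bound as an almost immediate consequence of Lemma~\ref{lem:3colouringnomonoP3} combined with the biclique characterisation given by Lemma~\ref{lem:powerofcyclesbicliques}. Concretely, I would take the 3-colouring $\pi\colon V(G)\to\{\mathit{blue},\mathit{red},\mathit{green}\}$ produced by Lemma~\ref{lem:3colouringnomonoP3} and argue that $\pi$ is in fact a 3-biclique-colouring of $G=C_n^k$.

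First I would recall that, by Lemma~\ref{lem:powerofcyclesbicliques}, every biclique of $G$ (for $n\geq 2k+2$) is either a $P_3$ biclique or a $C_4$ biclique. So it suffices to show that $\pi$ makes no set of this form monochromatic. For a $P_3$ biclique $\{v_i,v_j,v_\ell\}$, the vertices induce a $P_3$ in $G$ by definition, so by the defining property of $\pi$ they cannot all receive the same colour.

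Next I would handle the $C_4$ bicliques, which is the only step requiring a small observation. A $C_4$ biclique is a 4-vertex set inducing a $K_{2,2}$ in $G$; pick any three of its four vertices and note that they induce a $P_3$ in $G$ (since in $K_{2,2}$ every triple spans exactly two edges). Therefore, if such a $C_4$ biclique were monochromatic under $\pi$, any three of its vertices would already furnish a monochromatic $P_3$, contradicting Lemma~\ref{lem:3colouringnomonoP3}. Hence no biclique of $G$ is monochromatic and $\kappa_B(G)\leq 3$.

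I do not foresee a real obstacle here: Lemma~\ref{lem:3colouringnomonoP3} does all of the combinatorial work, and the only thing left is to reduce "no monochromatic biclique" to "no monochromatic $P_3$", which is trivial for $P_3$ bicliques and one line for $C_4$ bicliques. The proof is therefore essentially a two-sentence corollary, and it also inherits the linear-time algorithm implicit in the proof of Lemma~\ref{lem:3colouringnomonoP3}.
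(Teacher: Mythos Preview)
Your proposal is correct and follows exactly the paper's approach: the paper states Theorem~\ref{thm:bicliqueupperboundpowerofcycle} without a separate proof, having already explained in the text preceding Lemma~\ref{lem:3colouringnomonoP3} that ``since every biclique contains a $P_3$'', the 3-colouring of that lemma is a 3-biclique-colouring. Your write-up simply unpacks this one-line observation by treating the $P_3$ and $C_4$ bicliques from Lemma~\ref{lem:powerofcyclesbicliques} separately, which is fine and arguably clearer.
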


\begin{figure}[t]
\centering
	\includegraphics[scale=0.2]{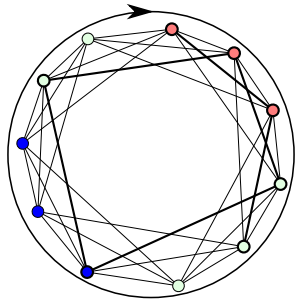}
	\caption{Power of a cycle $C_{11}^{3}$ with biclique-chromatic number~3.
	We highlight in bold a $P_{3}$ biclique of reach $4$ and a $C_4$ biclique.}
	\label{fig:c113}
\end{figure}

\begin{figure}[t]
\centering
	\subfloat
		[3-biclique-colouring, when $n \geq 2k + 2$ and $0 \leq t
		\leq k$.] {
			\includegraphics[scale=0.2]{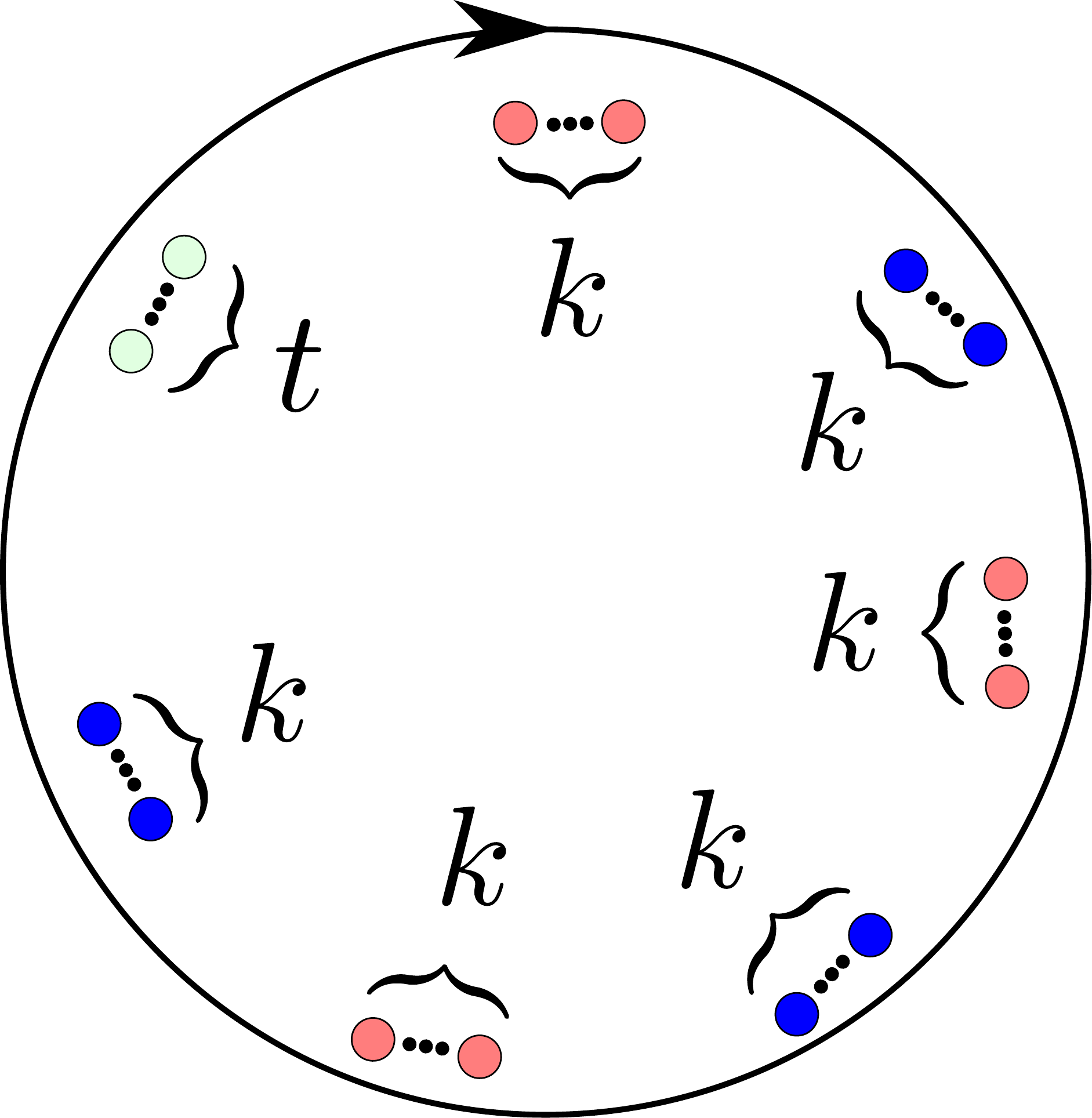}
			\label{fig:restinho}
		}
	\qquad
	\subfloat
		[3-biclique-colouring, when $n \geq 2k + 2$ and $k < t <
		2k$.] {
			\includegraphics[scale=0.2]{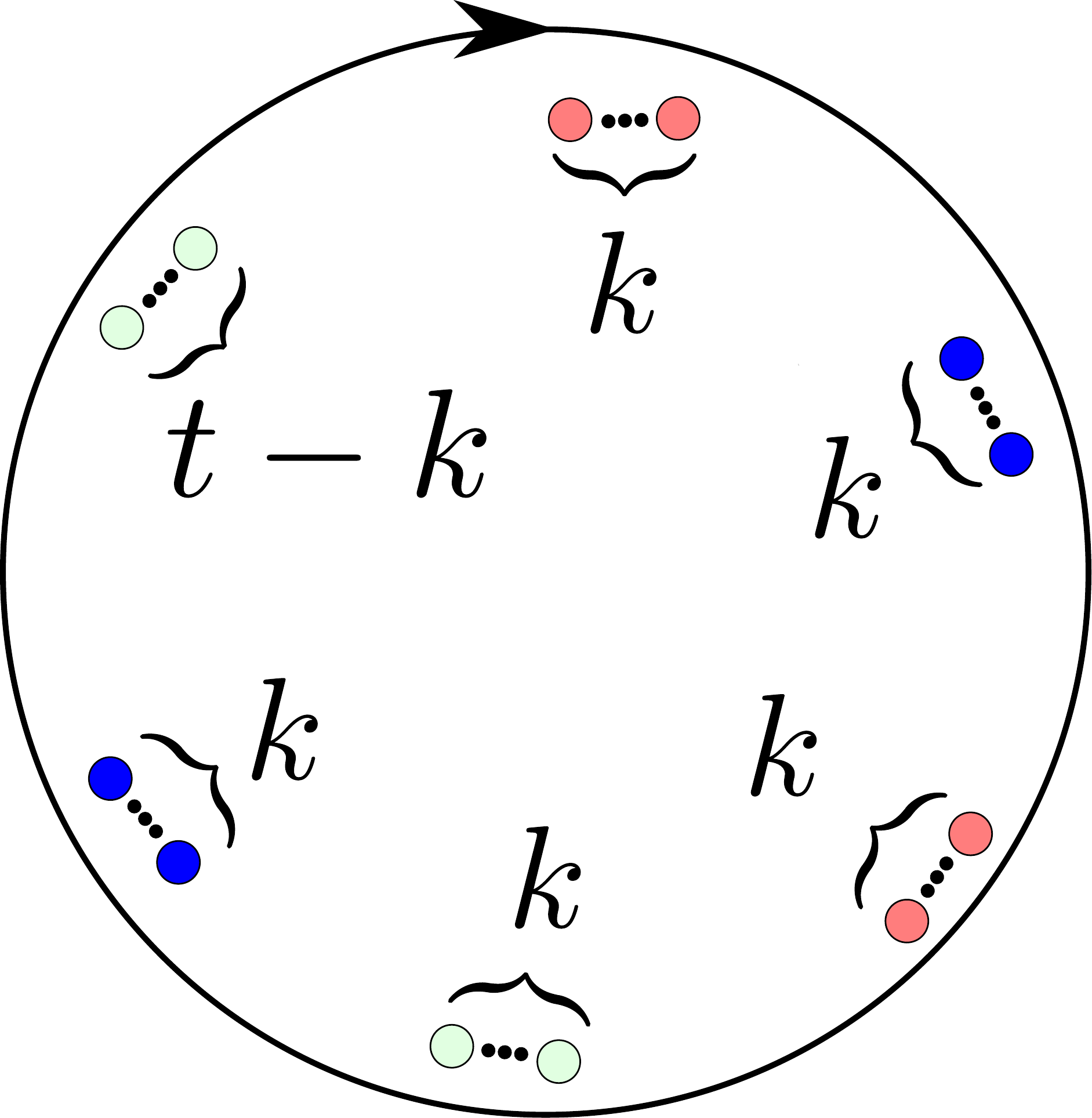}
			\label{fig:restao}
		}
	\qquad
		\subfloat
		[2-biclique-colouring
		when $2k + 2 \leq n \leq 3k + 1$] 
		{
			\includegraphics[scale=0.2]{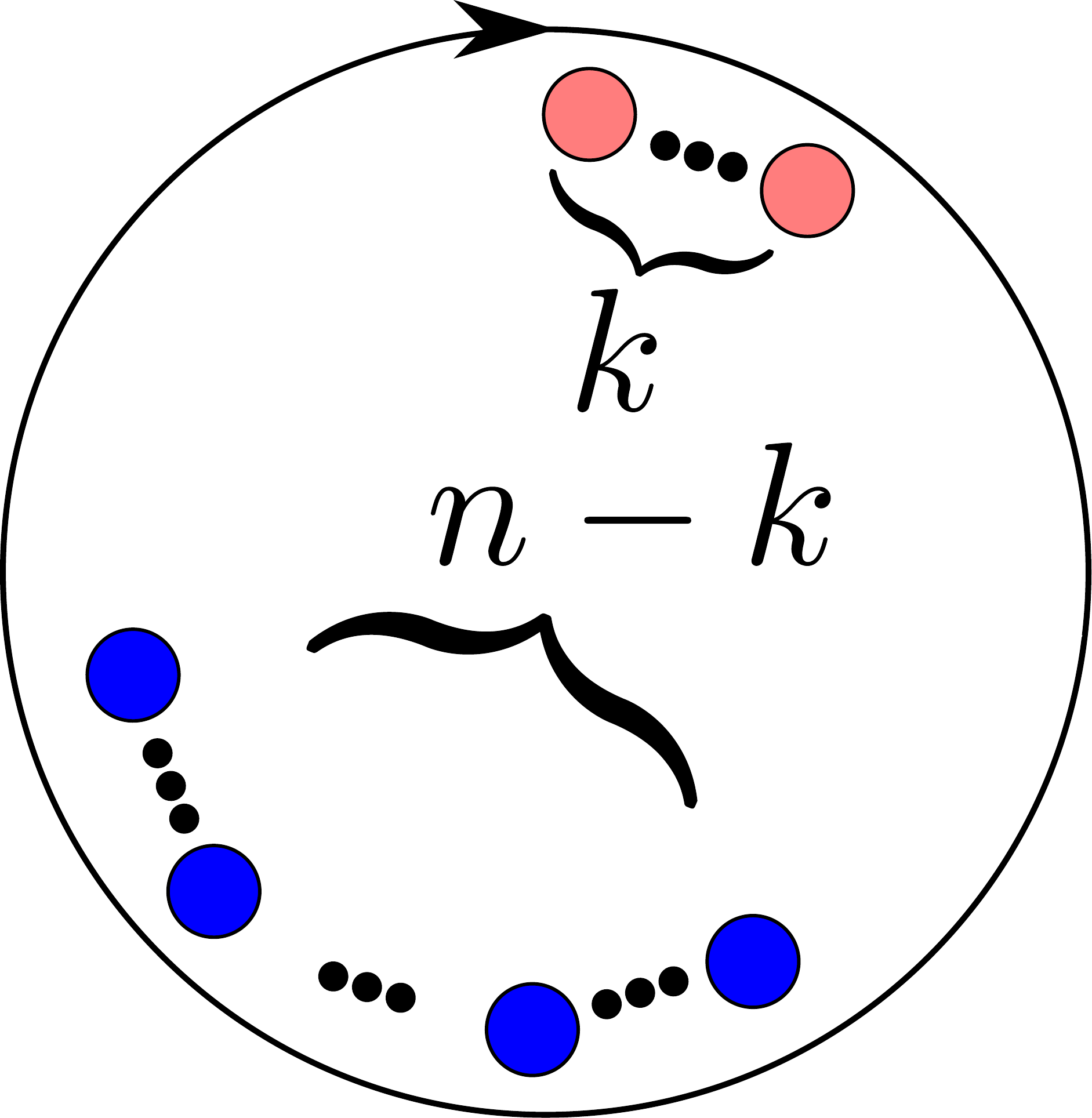}
			\label{fig:excessao}
		}
	\qquad
		\subfloat
		[2-biclique-colouring of a 2-biclique-colourable graph,
		when $n \geq 3k + 2$] 
		{
			\includegraphics[scale=0.2]{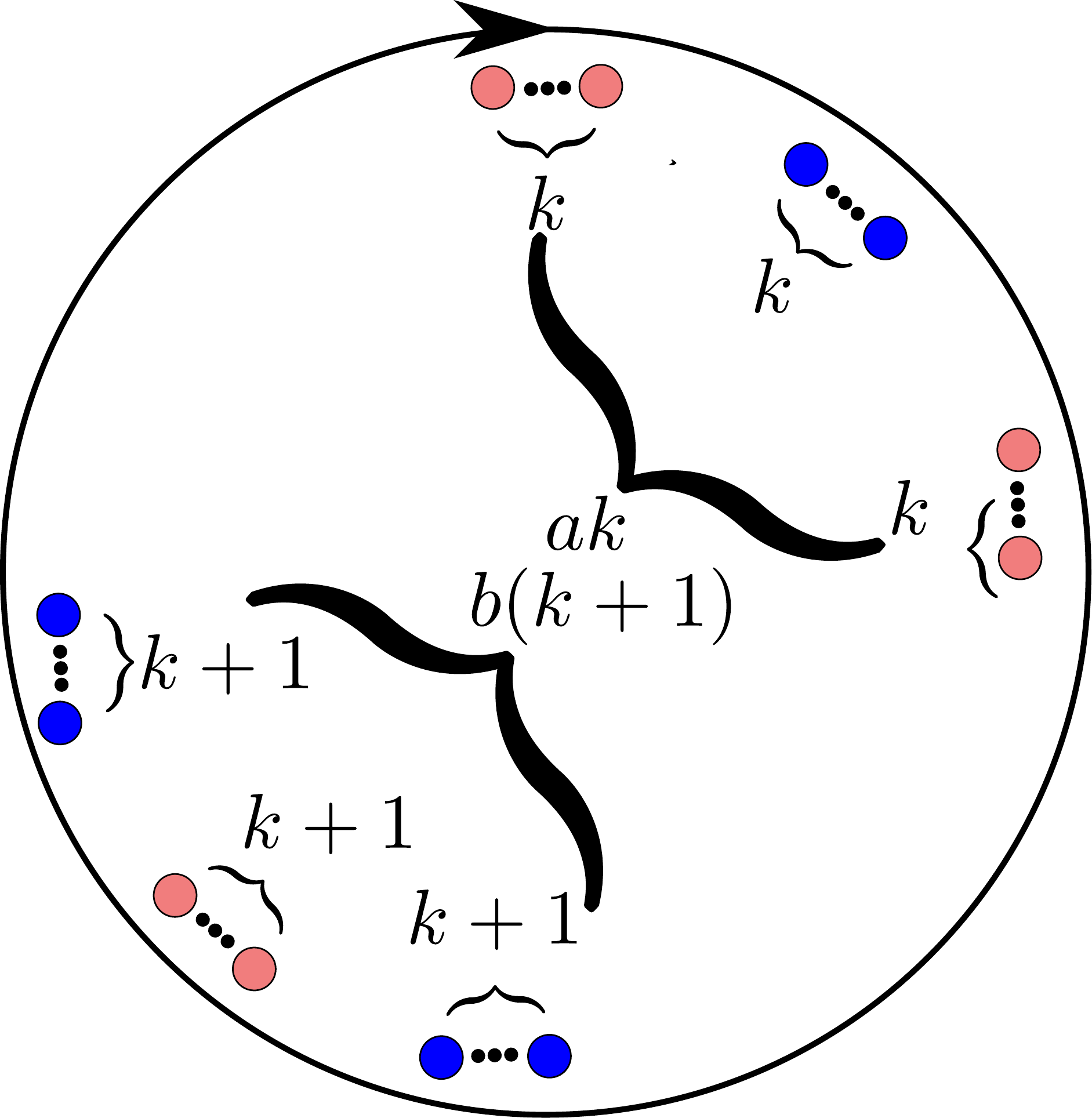}
			\label{fig:semresto}
		}
	\caption{Biclique-colouring of
	powers of cycles}
	\label{fig:2nd3bicliquecolouring}
\end{figure}

As a consequence of Theorem~\ref{thm:bicliqueupperboundpowerofcycle}, every
non-complete power of a cycle has biclique-chromatic number~2 or~3, and it is a
natural question how to decide between the two values.
We first settle this question in the less dense case $n \in [2k + 2, 3k + 1]$.
In fact, we show that all powers of cycles in the less dense case $n \in [2k +
2, 3k + 1]$ are 2-biclique-colourable --- the proof of
Theorem~\ref{thm:kappabpowerofcyclesecondinterval} additionally yields an
efficient 2-biclique-colouring algorithm.

\begin{theorem}
\label{thm:kappabpowerofcyclesecondinterval}
 A power of a cycle $C_n^k$, when $2k + 2 \leq n \leq 3k + 1$, has
 biclique-chromatic number~2.
\end{theorem}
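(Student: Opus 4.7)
The plan is to prove matching lower and upper bounds of $2$ for $\kappa_B(G)$ where $G = C_n^k$. The lower bound is immediate: Lemma~\ref{lem:powerofcyclesbicliques} shows that $G$ contains at least one biclique (a $C_4$ biclique in this range), so a $1$-colouring is monochromatic on it, forcing $\kappa_B(G) \ge 2$.

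For the upper bound, I would exploit the fact that Lemma~\ref{lem:powerofcyclesbicliques} also guarantees that, in the range $2k + 2 \le n \le 3k + 1$, the \emph{only} bicliques of $G$ are $C_4$ bicliques. Hence it suffices to exhibit a $2$-colouring with no monochromatic $K_{2,2}$. I would use the scheme suggested by Figure~\ref{fig:excessao}: cut the cyclic vertex set into two consecutive arcs of sizes $\lceil n/2 \rceil$ and $\lfloor n/2 \rfloor$, and assign a distinct colour to each arc.

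To verify correctness, suppose for contradiction that $\{v_a, v_b, v_c, v_d\}$ is a monochromatic $C_4$ biclique with non-edges $v_a v_c$ and $v_b v_d$. Then all four indices lie in a single arc of at most $\lceil n/2 \rceil$ consecutive vertices; after a cyclic relabelling I may assume $0 \le a < b < c < d \le \lceil n/2 \rceil - 1$. Therefore $d - a \le \lceil n/2 \rceil - 1$, and the hypothesis $n \ge 2k + 2$ gives $n - (d - a) \ge \lfloor n/2 \rfloor + 1 \ge k + 2$. Consequently the adjacency of $v_a$ and $v_d$ in $C_n^k$ can only come from $d - a \le k$. But then $c - a \le d - a \le k$ makes $v_a v_c$ an edge as well, contradicting the bipartition of the biclique.

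The proof is short, and the only real subtlety is the cyclic-distance case analysis: one must rule out the ``long way around'' contributing to the adjacency $v_a v_d$, which is exactly what $n \ge 2k + 2$ secures. Interestingly, the upper bound $n \le 3k + 1$ does not appear in the construction itself, but it is essential through Lemma~\ref{lem:powerofcyclesbicliques} to guarantee that no $P_3$ biclique is present; for $n \ge 3k + 2$, a $P_3$ biclique sitting entirely inside one arc would break the same colouring, consistent with the fact that some larger powers of cycles genuinely require three colours.
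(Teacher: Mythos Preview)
Your argument is essentially correct and your $2$-colouring does work, but it differs from the paper's construction and one step needs a line of justification. When you write ``after a cyclic relabelling I may assume $0 \le a < b < c < d$'' with the non-edges already fixed as $v_av_c$ and $v_bv_d$, a cyclic shift of $C_n^k$ only brings the monochromatic arc to $\{v_0,\ldots,v_{\lceil n/2\rceil-1}\}$; it does not by itself guarantee that the $C_4$-cycle order on $a,b,c,d$ coincides with the increasing index order inside the arc. The missing observation is this: inside the arc, adjacency is governed solely by $|j_q-j_p|\le k$ (you already showed the long-way distance exceeds $k$); writing the four indices as $j_1<j_2<j_3<j_4$, the gap $j_4-j_1$ dominates all others, so either $j_4-j_1\le k$ and the four vertices induce $K_4$, or $v_{j_1}v_{j_4}$ is a non-edge, whose matching partner in an induced $C_4$ would have to be $v_{j_2}v_{j_3}$, forcing $j_3-j_2>k$ and hence $j_3-j_1>k$, a third non-edge. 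Either way no induced $C_4$ survives, which is exactly what your sentence was reaching for.

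On the comparison with the paper: Figure~\ref{fig:excessao} does not depict a halving. The paper colours one block of size~$k$ (which can only carry $K_4$'s) and one block of size $n-k\le 2k+1$, and then argues that four vertices in the large block cannot induce a $C_4$. Your split into blocks of sizes $\lceil n/2\rceil$ and $\lfloor n/2\rfloor$ is a genuinely different, equally valid construction; it has the pleasant feature that both blocks are handled by a single symmetric argument, and --- as you correctly point out --- the verification uses only $n\ge 2k+2$, with the hypothesis $n\le 3k+1$ entering solely through Lemma~\ref{lem:powerofcyclesbicliques} to exclude $P_3$ bicliques.
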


\begin{proof}
Let $G$ be a power of a cycle $C_{n}^{k}$ with $2k + 2 \leq n \leq 3k + 1$. We
define $\pi:V(G)\rightarrow\{blue,red\}$ as follows. A monochromatic-block of
size $k$ with colour \emph{red} followed by a monochromatic-block of size $n-k$
with colour \emph{blue}. We refer to Figure~\ref{fig:excessao} to illustrate the
given 2-biclique-colouring.

Recall that every biclique of $G$ is a $C_{4}$ biclique. For the sake of
contradiction, suppose that there exists a monochromatic set $H$ of four
vertices. If $H$ is contained in the block of size $k$, then $H$ induces a
$K_4$ and cannot be a $C_4$. Otherwise, $H$ is contained in the block of size
$n - k \leq 2k + 1$ and there exists a subset of $H$ which induces a triangle,
so that $H$ cannot be a $C_4$ biclique.
\end{proof}

The sparse case $n \geq 3k + 2$ is more tricky. Let $G$ be a power of a cycle
$C_{n}^{k}$ with $n \geq 3k + 2$. Following
Lemma~\ref{lem:powerofcyclesbicliques}, there always exists a $P_3$ biclique in
$G$. Clearly, a biclique-colouring of $G$ has every $P_3$ biclique
polychromatic, but we may think that there exists some monochromatic $P_3$ (not
biclique). Nevertheless, we prove that $G$ has biclique-chromatic number~2 if,
and only if, there exists a 2-colouring of $G$ such that \textbf{no} $P_3$ is
monochromatic, which happens exactly when there exists a 2-colouring of $G$
where every monochromatic-block has size $k$ or $k + 1$.

\begin{figure}[t]
\centering
	\subfloat
		[vertices $v_{i-1}$, $v_{i+k-x}$, and $v_{i+k}$ induce a monochromatic
		$P_{3}$ with reach $k+1$] {
			\includegraphics[scale=0.3]{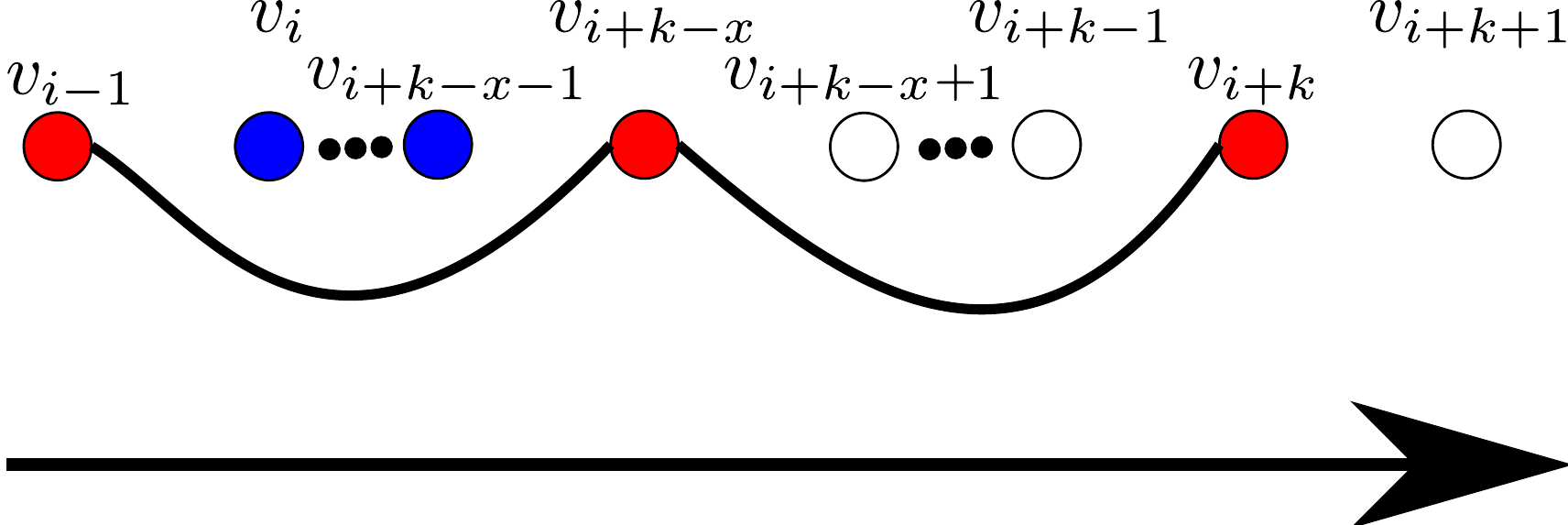}
			\label{fig:catarina1v2}
		}
	\qquad
	\subfloat
		[vertices $v_{i}$, $v_{i+k}$, and $v_{i+k+1}$ induce a monochromatic $P_{3}$
		with reach $k+1$]
		{
			\includegraphics[scale=0.3]{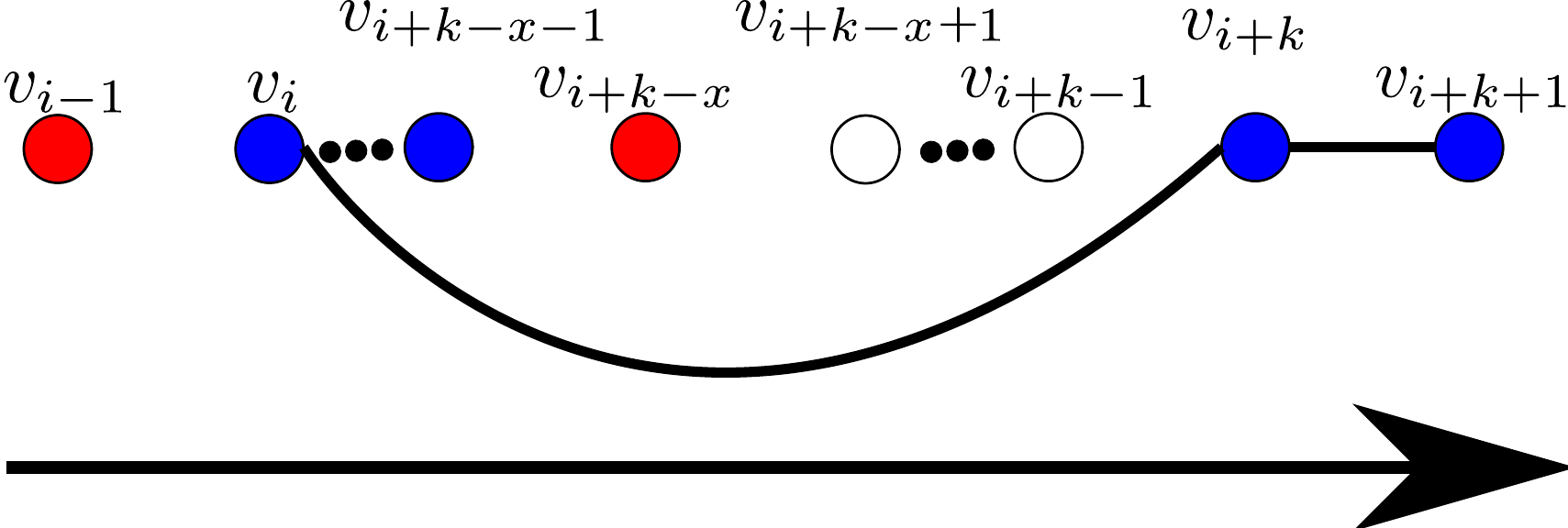}
			\label{fig:catarina2v2}
		}
	\qquad
	\subfloat
		[vertices $v_{i-1}$, $v_{i+k-x}$, and $v_{i+k+1}$ induce a monochromatic
		$P_{3}$ with reach $k+2$]
		{
			\includegraphics[scale=0.3]{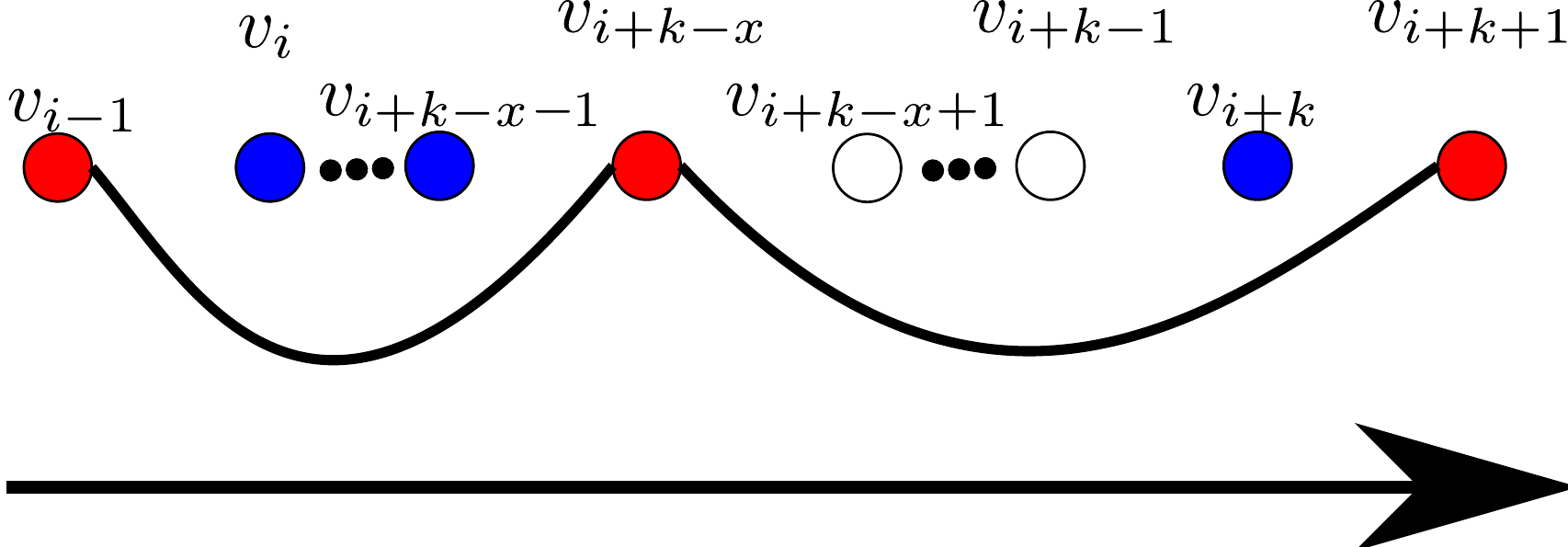}
			\label{fig:catarina3v2}
		}
	\caption{A monochromatic-block of size $x \neq k, k+1$ in a power of a cycle
	$C_{n}^{k}$, with $n \geq 2k + 2$, implies a monochromatic $P_{3}$ with reach
	$k+1$ or $k+2$.}
	\label{fig:catarinav2}
\end{figure}

\begin{lemma}
\label{l:iff}
Let $G$ be a power of a cycle $C_n^k$, where $n \geq 2k + 2$, and consider a
2-colouring of its vertices. If every monochromatic-block has size $k$ or
$k+1$, then $G$ has \textbf{no} monochromatic $P_3$. Otherwise, if
\textbf{not} every monochromatic-block has size $k$ or $k+1$, then $G$ has
a monochromatic $P_3$ with reach $k+1$ or $k+2$; in particular, when
$n = 3k + 2$, $G$ has a monochromatic~$P_3$ with reach $k+1$ or
$G$ has a monochromatic $C_4$.
\end{lemma}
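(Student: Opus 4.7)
The plan is to prove the two directions of the biconditional separately, and then argue the ``in particular'' strengthening at $n=3k+2$. For the forward direction, suppose every monochromatic block has size $k$ or $k+1$; I would show no monochromatic $P_3$ exists using two facts. First, within a block of size at most $k+1$, any two vertices lie at cyclic distance at most $k$ and are therefore adjacent, so three monochromatic vertices in a common block induce a triangle. Second, any two same-colour blocks are separated by at least one block of the opposite colour of size at least $k$, so two same-colour vertices in different blocks are at cyclic distance at least $k+1$ and hence non-adjacent. Three monochromatic vertices therefore either lie entirely within one block (triangle) or are distributed across distinct same-colour blocks with at most one adjacency among them, and so never form a $P_3$.

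For the reverse direction, suppose some block $B$ has size $x\notin\{k,k+1\}$. If $x\geq k+2$, I would pick $v_i,v_{i+k},v_{i+k+1}$ inside $B$: they share a colour, the first two pairs are edges of reach $k$ and $1$, and $v_iv_{i+k+1}$ has cyclic distance $\min(k+1,n-k-1)=k+1>k$ (using $n\geq 2k+2$) and so is a non-edge, yielding the reach-$(k+1)$ $P_3$ of Figure~\ref{fig:catarina2v2}. If instead $x\leq k-1$, write $B=\{v_i,\ldots,v_{i+x-1}\}$ with colour $c$; then $v_{i-1}$ and $v_{i+x}$ both have colour $c'$ and are at distance $x+1\leq k$, hence adjacent. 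I would extend this edge to a monochromatic $P_3$ on colour $c'$ by choosing a third vertex further along the cycle: at position $i+k$ for reach $k+1$ (Figure~\ref{fig:catarina1v2}), or at position $i+k+1$ for reach $k+2$ (Figure~\ref{fig:catarina3v2}). Which option succeeds depends on the size of the $c'$-block following $B$, which determines the colour at positions $i+k-x$, $i+k$, and $i+k+1$. If none of these candidates has colour $c'$ (so the $c'$-block after $B$ is itself short), I would apply the symmetric construction with the colours swapped: the short $c'$-block brings its two neighbouring $c$-blocks close in the same-colour sense and produces a monochromatic $P_3$ on colour $c$ by the same argument.

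For the ``in particular'' strengthening at $n=3k+2$, I would exploit the wrap-around identity $a+2k+2=a-k$ modulo $n$. For any reach-$(k+2)$ monochromatic $P_3$ with endpoints $v_a,v_{a+k+2}$ and middle vertex $v_b$, the vertex $v_{a-k}$ is a common neighbour of $v_a$ and $v_{a+k+2}$ at cyclic distance exactly $k$ from each (via the long arc), and $v_{a-k}$ is non-adjacent to $v_b$ since their cyclic distance is at least $k+2$. If $v_{a-k}$ has the $P_3$'s colour, then $\{v_a,v_b,v_{a+k+2},v_{a-k}\}$ is a monochromatic $C_4$. If not, the colour mismatch forces a block-size imbalance that I would convert into a monochromatic $P_3$ of reach $k+1$ by reapplying the small-block analysis above.

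The delicate portion is the small-block case of the reverse direction: a thorough argument must track the sizes of several consecutive blocks around $B$ and switch the target colour between $c'$ and $c$ depending on whether the $c'$-neighbours of $B$ are themselves short. The $n=3k+2$ strengthening is additionally subtle because the wrap-around vertex $v_{a-k}$ need not automatically have the $P_3$'s colour, so an extra structural argument is needed to either complete a $C_4$ or back out to a reach-$(k+1)$ $P_3$.
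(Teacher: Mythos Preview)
Your forward direction and the large-block case ($x\geq k+2$) match the paper. The real gap is in the small-block case ($x\leq k-1$). You aim to extend the $c'$-edge $v_{i-1}v_{i+x}$ to a $c'$-coloured $P_3$ via $v_{i+k}$ or $v_{i+k+1}$, and when neither of these has colour $c'$ you propose to ``swap colours'' and recurse on the short $c'$-block following $B$. That recursion has no termination argument: the $c$-block after the short $c'$-block may again fail the test, and so on around the cycle. The paper avoids any recursion by a one-line observation you are missing: if both $v_{i+k}$ and $v_{i+k+1}$ have colour $c$, then $\{v_i,v_{i+k},v_{i+k+1}\}$ is already a monochromatic $P_3$ of reach $k+1$ in colour $c$ (using $v_i\in B$). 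Hence the paper's small-block argument is a single three-way split on the colour pair $(\pi(v_{i+k}),\pi(v_{i+k+1}))$: colour $c'$ at $v_{i+k}$ gives the $c'$-coloured $P_3$ $\{v_{i-1},v_{i+x},v_{i+k}\}$ of reach $k+1$; the pair $(c,c)$ gives the $c$-coloured $P_3$ $\{v_i,v_{i+k},v_{i+k+1}\}$ of reach $k+1$; the pair $(c,c')$ gives the $c'$-coloured $P_3$ $\{v_{i-1},v_{i+x},v_{i+k+1}\}$ of reach $k+2$. No iteration is needed.

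Your $n=3k+2$ outline is the paper's strategy, but ``reapplying the small-block analysis'' is not enough and, combined with your recursive small-block plan, risks circularity. The paper first pins down two extra colours from the reach-$(k+2)$ $P_3$ itself: the vertices $v_i$ and $v_{i+k}$ immediately inside its endpoints must have colour $c$, for otherwise a reach-$(k+1)$ $P_3$ is already present. With these two $c$-coloured anchors in hand, the paper then does a concrete case split on the colours of $v_{i+2k+1}$, $v_{i+2k}$, and $v_{i+2k+2}$; each branch exhibits either the monochromatic $C_4$ on $\{v_{i-1},v_{i+x},v_{i+k+1},v_{i+2k+1}\}$ or an explicit reach-$(k+1)$ $P_3$ that uses one of the anchors $v_i$, $v_{i+k}$, or the known $c'$-vertex $v_{i+k+1}$. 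That explicit case analysis is the content your sketch defers.
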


\begin{proof}
Let $G$ be a power of a cycle $C_{n}^{k}$ with $n \geq 2k + 2$.
Consider a 2-colouring $\pi$ of the vertices of $G$ such that every
monochromatic-block has size $k$ or $k+1$. 

Consider any three vertices $v_i$, $v_j$ and $v_\ell$ with the
same colour. Then, either they are in the same monochromatic-block --- and
induce a triangle --- or two of them have indices that differ by at least $k+1$
with respect to the third vertex --- and the three vertices induce a
disconnected graph. In both cases, $v_i$, $v_j$ and $v_\ell$ do not induce a
$P_3$. Hence, no $P_3$ is monochromatic.

Now, consider a 2-colouring $\pi$ of the vertices of $G$ such that there exists
a monochromatic-block of size $x \neq k, k+1$.
Consider a monochromatic-block of size $p \geq k+2$ with vertices
$v_{i}$, $v_{i+1}$, $v_{i+2}$, $\ldots$, $v_{i+k+1}$, $\ldots$, and
$v_{i+p-1}$.
Notice that vertices $v_{i}$, $v_{i+1}$, and $v_{i+k+1}$ induce a~$P_3$. So, we
may assume that there exists a monochromatic-block with vertices
$v_{i}$, $v_{i+1}$, $v_{i+2}$, $\ldots$, $v_{i+k+1}$, $\ldots$, $v_{i+k-x-1}$,
where $x > 0$. By symmetry, consider that $v_i$ has blue colour. Notice that
vertices $v_{i-1}$ and $v_{i+k-x}$ are adjacent and with red colour. Please
refer to Figure~\ref{fig:catarinav2}. Suppose that vertex $v_{i+k}$ has
red colour. Then, vertices $v_{i-1}$, $v_{i+k-x}$, and $v_{i+k}$ induce a monochromatic
$P_3$ with reach $k+1$ (see Figure~\ref{fig:catarina1v2}).
Now, consider vertex $v_{i+k}$ has blue colour. Suppose that vertex $v_{i+k+1}$
has blue colour, then vertices $v_{i+k}$, $v_{i+k+1}$, and $v_{i}$ induce a 
monochromatic $P_3$ with reach $k+1$ (see Figure~\ref{fig:catarina2v2}). Now,
consider vertex $v_{i+k+1}$ has red colour and vertices $v_{i-1}$,
$v_{i+k-x}$, and $v_{i+k+1}$ induce a monochromatic $P_3$ with reach $k+2$ (see
Figure~\ref{fig:catarina3v2}).

Now, consider the case $n = 3k + 2$. We know that $G$ has a monochromatic $P_3$
of reach $k+1$ or $k+2$. In the first case, we are done, so we assume that $G$
has a monochromatic $P_3$ $v_{i-1}$, $v_{i+k-x}$, and $v_{i+k+1}$ of red colour.
Moreover, vertex $v_i$ (resp. vertex $v_{i+k}$) has blue colour, otherwise
vertices $v_i$, $v_{i+k-x}$, and $v_{i + k + 1}$ (resp. vertices $v_{i-1}$,
$v_{i+k-x}$, and $v_{i + k}$) would induce a monochromatic $P_3$ with reach
$k+1$. Vertices $v_{i-1}$, $v_{i+k-x}$, $v_{i+k+1}$, and $v_{i+2k+1}$ induce the
unique $C_4$ that includes vertices $v_{i-1}$, $v_{i+k-x}$, and $v_{i+k+1}$.
Please refer to Figure~\ref{fig:catarinav2.2}.
Suppose vertex $v_{i+2k+1}$ has red colour, then vertices $v_{i-1}$,
$v_{i+k-x}$, $v_{i+k+1}$, and $v_{i+2k+1}$ induce a monochromatic $C_4$ (see
Figure~\ref{fig:catarina4v2}).
Now, consider vertex $v_{i+2k+1}$ has blue colour. Suppose that vertex
$v_{i+2k}$ (resp.$v_{i+2k+2}$) has blue colour, then vertices $v_{i+k}$,
$v_{i+2k}$, and $v_{i+2k+1}$ (resp. $v_{i+2k+1}$, $v_{i+2k+2}$, and
$v_{i+3k+2}$) induce a monochromatic $P_3$ with reach $k+1$ (see
Figure~\ref{fig:catarina5v2}). Now, consider vertices $v_{i+2k}$ and
$v_{i+2k+2}$ have red colour. Vertices $v_{i+k+1}$, $v_{i+2k}$, and
$v_{i+2k+2}$ induce a monochromatic $P_3$ with reach $k+1$ (see
Figure~\ref{fig:catarina6v2}).
\end{proof}

\begin{figure}[t]
\centering
	\subfloat
		[vertices $v_{i-1}$, $v_{i+k-x}$, $v_{i+k+1}$, and $v_{i+2k+1}$ induce a
		monochromatic $C_{4}$] {
			\includegraphics[scale=0.3]{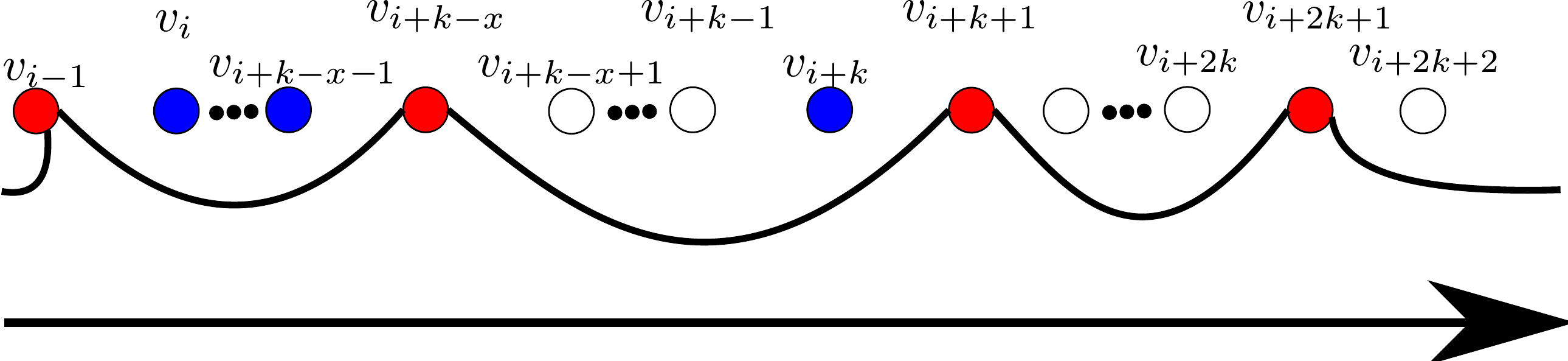}
			\label{fig:catarina4v2}
		}
	\qquad
	\subfloat
		[vertices $v_{i+k}$, $v_{i+2k}$, and $v_{i+2k+1}$ (resp. $v_{i+2k+1}$,
		$v_{i+2k+2}$, and $v_{i}$) induce a monochromatic $P_{3}$ with reach
		$k+1$] {
			\includegraphics[scale=0.3]{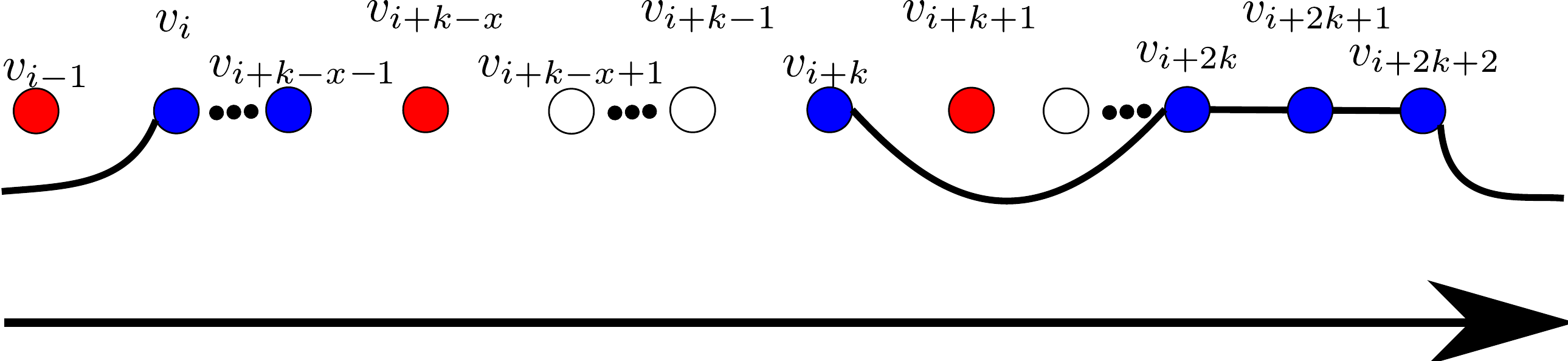}
			\label{fig:catarina5v2}
		}
	\qquad
	\subfloat
		[vertices $v_{i+k+1}$, $v_{i+2k}$, and $v_{i+2k+2}$ induce a monochromatic
		$P_{3}$ with reach $k+1$]
		{
			\includegraphics[scale=0.3]{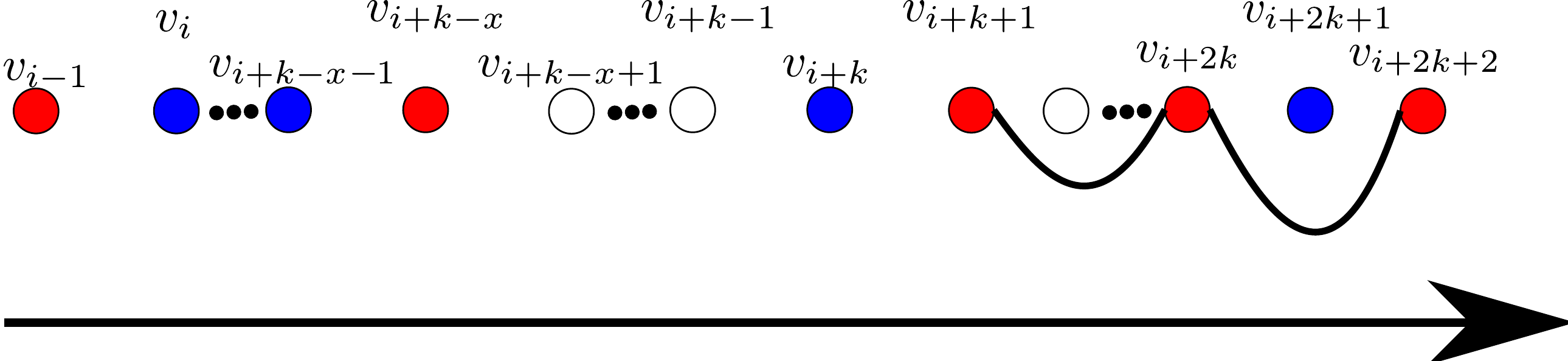}
			\label{fig:catarina6v2}
		}
	\caption{A monochromatic-block of size $x \neq k, k+1$ in a power of a cycle
	$C_{n}^{k}$, with $n = 3k + 2$, implies a monochromatic $P_{3}$ with reach
	$k+1$ or a monochromatic $C_4$.}
	\label{fig:catarinav2.2}
\end{figure}

\begin{theorem}
\label{thm:kappabpowerofcyclethirdinterval}
A power of a cycle $C_n^k$, when $n \geq 3k + 2$, has biclique-chromatic
number~2 if, and only if, there exist natural numbers $a$ and $b$, such that $n =
ak + b(k+1)$ and $a + b \geq 2$ is even.
\end{theorem}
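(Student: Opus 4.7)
The plan is to reduce the theorem to a characterisation of 2-colourings avoiding monochromatic $P_3$'s, and then translate that characterisation into the arithmetic condition on $n$. The bridge is Lemma~\ref{l:iff}, which tells us that a 2-colouring of $V(C_n^k)$ has no monochromatic $P_3$ if and only if every monochromatic-block has size $k$ or $k+1$. Once the equivalence ``2-biclique-colourable $\iff$ admits a 2-colouring with no monochromatic $P_3$'' is established, the decomposition $n = ak + b(k+1)$ with $a$ blocks of size $k$ and $b$ blocks of size $k+1$ is exactly what a cyclic arrangement of such blocks describes, and the parity condition $a+b$ even is forced because adjacent blocks alternate colour around the cycle.

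For the sufficiency direction, I would start from a decomposition $n=ak+b(k+1)$ with $a+b\geq 2$ even and place the $a+b$ blocks around the cycle in any order, colouring them alternately red and blue. The evenness of $a+b$ makes the alternation close consistently, and Lemma~\ref{l:iff} immediately gives that no $P_3$ is monochromatic. Since Lemma~\ref{lem:powerofcyclesbicliques} tells us every biclique of $C_n^k$ (for $n \geq 3k+2$) is either a $P_3$ or a $C_4$, and every $C_4$ contains a $P_3$, no biclique is monochromatic. Thus $\kappa_B(G)\leq 2$; and $\kappa_B(G)\geq 2$ because bicliques exist.

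The necessity direction is where the main work lies. Given a 2-biclique-colouring $\pi$, I would argue by contrapositive: if some monochromatic-block has size $\neq k, k+1$, then Lemma~\ref{l:iff} furnishes either a monochromatic $P_3$ of reach $k+1$ or $k+2$, or (in the boundary case $n=3k+2$) a monochromatic $P_3$ of reach $k+1$ or a monochromatic $C_4$. To obtain the desired contradiction I must show that each such structure is \emph{itself} a biclique. A monochromatic $C_4$ is always a biclique in $C_n^k$ by the graph's $K_{1,3}$-freeness. For a $P_3$ of reach $k+1$ or $k+2$, I would perform a cyclic-distance analysis on the common neighbours of the two endpoints: in the ``between'' region common neighbours always lie within cyclic distance $k$ of the middle vertex, and the wrap-around region contributes no vertex when $n$ is large enough (the bound $n \geq 3k+2$ handles reach $k+1$, and $n \geq 3k+3$ handles reach $k+2$). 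The critical reason Lemma~\ref{l:iff} was strengthened at $n=3k+2$ is exactly that a reach-$(k+2)$ $P_3$ can fail to be a biclique there, being contained in a $C_4$; the stronger conclusion (giving a monochromatic $C_4$ directly) sidesteps this gap.

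Having shown every block has size $k$ or $k+1$, I would set $a$ and $b$ to the number of blocks of each size, so that $n = ak+b(k+1)$. The parity $a+b$ even is immediate from the alternation on the cycle, and $a+b\geq 2$ follows because a single all-monochromatic cycle contains a $P_3$ biclique (which exists by Lemma~\ref{lem:powerofcyclesbicliques} since $n\geq 3k+2$), contradicting $\pi$ being a biclique-colouring. The main obstacle in the whole argument is the biclique-verification step for monochromatic $P_3$'s of reach $k+1$ and $k+2$: it requires a careful cyclic-distance computation and is the point where the boundary behaviour at $n=3k+2$ forces the two-case formulation of Lemma~\ref{l:iff}.
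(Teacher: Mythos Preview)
Your proposal is correct and follows essentially the same approach as the paper: both directions hinge on Lemma~\ref{l:iff}, with sufficiency built from alternating blocks of sizes $k$ and $k+1$, and necessity obtained by showing that a block of forbidden size forces a monochromatic biclique (a $P_3$ of reach $k+1$ or $k+2$ when $n>3k+2$, and a reach-$(k+1)$ $P_3$ or a $C_4$ when $n=3k+2$). You are, if anything, more explicit than the paper about why a reach-$(k+2)$ $P_3$ can fail to be a biclique precisely at $n=3k+2$ and about extracting $a$, $b$, and the parity $a+b$ even from the block decomposition; the paper asserts these steps more tersely.
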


\begin{proof}
Let $G$ be a power of a cycle $C_n^k$ with $n \geq 3k + 2$.

First, consider natural numbers $a$ and $b$, such that $n = ak + b(k+1)$ and $a
+ b \geq 2$ is even. Then, there exists a 2-colouring $\pi$ such that every
monochromatic-block has size $k$ or $k+1$. Lemma~\ref{l:iff} says that $G$ has 
\textbf{no} monochromatic $P_{3}$ and therefore $\pi$ is a 2-biclique-colouring.
We refer to Figure~\ref{fig:semresto} to illustrate such 2-biclique-colouring.

For the converse, suppose that there are no such $a$ and $b$, which implies
that any 2-colouring $\pi^\prime$ of the vertices of $G$ is such that there
exists a monochromatic-block of size $x \neq k, k+1$. Consider $n = 3k +
2$. Lemma~\ref{l:iff} says that such 2-colouring of the vertices of $G$ has
a monochromatic~$P_{3}$ with reach $k+1$ or a monochromatic~$C_{4}$. Every
$P_3$ with reach $k+1$ is a biclique and every $C_4$ is a biclique, which
implies that $\pi^\prime$ is not a 2-biclique-colouring, which is a contradiction.
Now, consider $n > 3k + 2$. Lemma~\ref{l:iff} says that such 2-colouring of
the vertices of $G$ has a monochromatic~$P_{3}$ with reach $k+1$ or $k+2$. Every
$P_3$ with reach $k+1$ or $k+2$ is a $P_3$ biclique, which implies that
$\pi^\prime$ is not a 2-biclique-colouring, which is a contradiction.
\end{proof}

 There exists an efficient algorithm that verifies if the system of equations of
 Theorem~\ref{thm:kappabpowerofcyclethirdinterval} has a solution. If so,
 it also computes values of $a$ and $b$ -- the proof of
 Theorem~\ref{thm:algoritmopowerofcycle} yields
 Algorithm~\ref{alg:numerobicliquecromaticopowerofcycle} to determine if the
 biclique-chromatic number is 2 or 3 and also computes values of $a$ and $b$.
 When the biclique-chromatic number is 2, we define a 2-biclique-colouring
 $\pi:V(G)\rightarrow\{blue,red\}$ as follows. A number $a$ of monochromatic-blocks
 of size $k$ plus a number $b$ of monochromatic-blocks of size $k + 1$ switching
 colours \emph{red} and \emph{blue} alternately. We refer to
 Figure~\ref{fig:semresto} to illustrate the given 2-biclique-colouring.

\begin{theorem}
\label{thm:algoritmopowerofcycle}
 There exists an algorithm that computes the biclique-chromatic
 number of a power of a cycle $C_n^k$, when $n \geq 3k + 2$.
\end{theorem}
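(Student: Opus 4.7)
By Theorem~\ref{thm:bicliqueupperboundpowerofcycle}, the biclique-chromatic number of $C_n^k$ for $n \geq 3k+2$ is either $2$ or $3$, so the algorithm only needs to decide which. Theorem~\ref{thm:kappabpowerofcyclethirdinterval} reduces this decision to the Diophantine question: do there exist natural numbers $a$ and $b$ with $n = ak + b(k+1)$ and $a+b \geq 2$ even? The plan is to turn this question into a constant-time arithmetic check by eliminating one variable, and then to emit the explicit colouring already described in the paragraph preceding the statement.

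Set $s := a + b$. Then $n = ak + b(k+1) = sk + b$ forces $b = n - sk$ and $a = s(k+1) - n$, so non-negative integer solutions $(a,b)$ are in bijection with integers $s$ satisfying $\lceil n/(k+1) \rceil \leq s \leq \lfloor n/k \rfloor$. For $n \geq 3k+2$ one has $\lceil n/(k+1) \rceil \geq \lceil (3k+2)/(k+1) \rceil = 3$, so the constraint $s \geq 2$ is automatic and only the parity requirement on $s$ remains. The algorithm therefore computes $s_{\min} := \lceil n/(k+1) \rceil$ and $s_{\max} := \lfloor n/k \rfloor$, sets $s^\star$ to $s_{\min}$ if $s_{\min}$ is even and to $s_{\min}+1$ otherwise, and tests whether $s^\star \leq s_{\max}$.

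If the test succeeds, the algorithm outputs biclique-chromatic number $2$ together with $a := s^\star(k+1) - n$ and $b := n - s^\star k$, and then constructs the 2-biclique-colouring $\pi$ described right before the theorem: lay $a$ monochromatic blocks of size $k$ and $b$ monochromatic blocks of size $k+1$ around the cycle, alternating the two colours. Since $a+b = s^\star$ is even, the alternation closes consistently; Lemma~\ref{l:iff} and Theorem~\ref{thm:kappabpowerofcyclethirdinterval} guarantee that $\pi$ is indeed a biclique-colouring. If the test fails, Theorem~\ref{thm:kappabpowerofcyclethirdinterval} certifies that $\kappa_B(G) = 3$, and the algorithm outputs the 3-biclique-colouring from Lemma~\ref{lem:3colouringnomonoP3}.

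The decision itself uses a constant number of arithmetic operations in the RAM model, and emitting the explicit colouring takes $O(n)$ time. The only delicate part is the boundary arithmetic around the ceilings and floors, and the verification that no even $s$ is missed: this reduces to observing that any two even integers in $[s_{\min}, s_{\max}]$ differ by at least $2$, so the smallest candidate of the correct parity is either $s_{\min}$ or $s_{\min}+1$. No genuine combinatorial difficulty remains because Theorems~\ref{thm:bicliqueupperboundpowerofcycle} and~\ref{thm:kappabpowerofcyclethirdinterval} have already done the structural work; the present theorem is merely the algorithmic realization of that characterization.
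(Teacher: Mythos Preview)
Your proof is correct and follows essentially the same approach as the paper: both substitute $s=a+b$ (the paper calls it $c$), rewrite the Diophantine condition as $n=sk+b$ with $0\le b\le s$ and $s$ even, and then test at most two candidate values of $s$. The only cosmetic difference is that the paper tests the two \emph{largest} candidates $\lfloor n/k\rfloor$ and $\lfloor n/k\rfloor-1$, whereas you test the two \emph{smallest} candidates $\lceil n/(k+1)\rceil$ and $\lceil n/(k+1)\rceil+1$; either end of the interval works for detecting an even integer inside it.
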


\begin{proof}

Theorem~\ref{thm:bicliqueupperboundpowerofcycle} states that the
biclique-chromatic number of a power of a cycle $C_n^k$ is at most~3 and
Theorem~\ref{thm:kappabpowerofcyclethirdinterval} states that a power of a cycle
$C_n^k$ with $n \geq 3k + 2$ has biclique-chromatic number~2 if, and only if,
there exist natural numbers $a$ and $b$, such that $n = ak + b(k+1)$ and $a + b
\geq 2$ is even.

Let $c = a + b$. We show that there exist natural numbers $b$ and $c$, such that
$n = ck + b$, $b \leq c$, and $c$ is even if, and only if, natural numbers $c_0 =
\left\lfloor\frac{n}{k}\right\rfloor$ and $b_0 = n - c_0 k$ have the
following properties: $c_0$ is even and $b_0 \leq c_0$; or 
natural numbers $c_1 = \left\lfloor\frac{n}{k}\right\rfloor - 1$ and $b_1 = n -
c_1 k$ have the following properties: $c_1$ is even and $b_1 \leq c_1$. 

Clearly, $b_0$ and $c_0$ (resp. $b_1$ and $c_1$) are natural numbers such that
$n = c_0 k + b_0$ (resp. $n = c_1 k + b_1$), $b_0 \leq c_0$ (resp. $b_1 \leq
c_1$), $c_0$ (resp. $c_1$) is even, and $c_0 \geq 2$ (resp. $c_1 \geq 2$) since $n
\geq 2k + 2$.

For the converse, suppose that there exist natural numbers $a$ and $b$, such
that $n = ck + b$ and $c$ is even. Let $b^\prime = b$ and $c^\prime = c$. While $b^\prime
\geq 2k$, do $c^\prime := c^\prime + 2$ and $b^\prime := b^\prime - 2k$.
Clearly, in the end of the loop, we have $c^\prime$ even, $b^\prime \geq 0$, and
$c^\prime \geq b^\prime$. Moreover, we consider two cases.

\begin{itemize}
  \item $b^\prime < k$ in the end of the loop. Then, $c^\prime =
  \left\lfloor\frac{n}{k}\right\rfloor$ and $b^\prime = n - c^\prime k$.
  \item $k \leq b^\prime < 2k$ in the end of the loop. Then, $c^\prime =
  \left\lfloor\frac{n}{k}\right\rfloor - 1$ and $b^\prime = n - c^\prime k$.
\end{itemize}
\end{proof}

As a remark, in Theorem~\ref{thm:algoritmopowerofcycle},
we let $c = a + b$ and rewrite the equation $n = ak + b(k+1)$ as $n = ck + b$,
very similar to the Division Algorithm formula. Nevertheless, there is a rather
subtle difference: in the Division Algorithm formula, the choice for the
value of the remainder is bounded by the value of the divisor, while in the
equation $n = ck + b$, the choice for the value of the remainder is bounded by
the choice for the value of the quotient (recall $b \leq c$). This subtle
difference may change drastically the behavior of the equation. More precisely,
given two natural numbers $n$ and $k$, with $n \geq 2k + 2$, it is not
necessarily true that there exist natural numbers $b$ and $c$ such that $n = ck
+ b$, $c \geq 2$ is even, and $b \leq c$. For instance, there do not exist
natural numbers $b$ and $c$ such that $11 = 3c + b$, $c \geq 2$ is even, and $b
\leq c$.

	\begin{algorithm}[t]
	\label{alg:numerobicliquecromaticopowerofcycle}
	\SetKwInOut{Input}{input}\SetKwInOut{Output}{output}
	\Input{$C_n^k$, a power of a cycle with $n \geq 3k + 2$}
	\Output{$\kappa_{B}(C_n^k)$, the biclique-chromatic number of $C_n^k$.}
	\caption{To compute the biclique-chromatic number of a power of a cycle $C_n^k$
	with $n \geq 3k + 2$}
	\BlankLine
	\Begin
	{	
		$c \longleftarrow \left\lfloor \frac{n}{k} \right\rfloor$\;
		$b \longleftarrow n - ck$\;
		\eIf{$c \bmod 2 = 0$ and $c \geq b$}
		{
				\Return{$2$\;}
		}
		{
			$c \longleftarrow \left\lfloor \frac{n}{k} \right\rfloor - 1$\;
			$b \longleftarrow n - ck$\;
			\eIf{$c \bmod 2 = 0$ and $c \geq b$}
			{
					\Return{$2$\;}
			}
			{
				\Return{$3$\;}
			}
		}
	}
	\end{algorithm}

\section{Final considerations}
\label{sec:final}

The reader should notice the structure differences between the two
considered classes of power graphs and observe the similarities on giving lower
and upper bounds on the biclique-chromatic number. For instance,
the lower bound on the biclique-chromatic number in both cases when $n \leq 2k$ 
is a consequence of the existence of a set of $K_2$ bicliques whose union
induces a complete graph --- in the case of powers of cycles, this can happen
only when such union is the whole vertex set, but in the case of powers of
paths such union can be the whole vertex set (when $n \leq k + 1$) or a
vertex subset of size $2k + 2 - n$ (when $k + 2 \leq n \leq 2k$). When $n \geq
2k + 1$, monochromatic-blocks are the key step to construct optimal colourings. 
Nevertheless, in the given colourings, for powers of
paths, vertices $v_0$ and $v_{n-1}$ may have the same colour, which is not the
case for powers of cycles.

Table~\ref{t:tabela} highlights the exact values for the biclique-chromatic
number of the power graphs settled in this work. In
Figures~\ref{fig:kappaboscilapath}~and~\ref{fig:kappaboscilacycle}, we illustrate
the biclique-chromatic number for a fixed value of $k$ and an increasing $n$ of
powers of paths and powers of cycles, respectively.

\begin{figure}[t]
\centering
	\includegraphics[width=7cm]{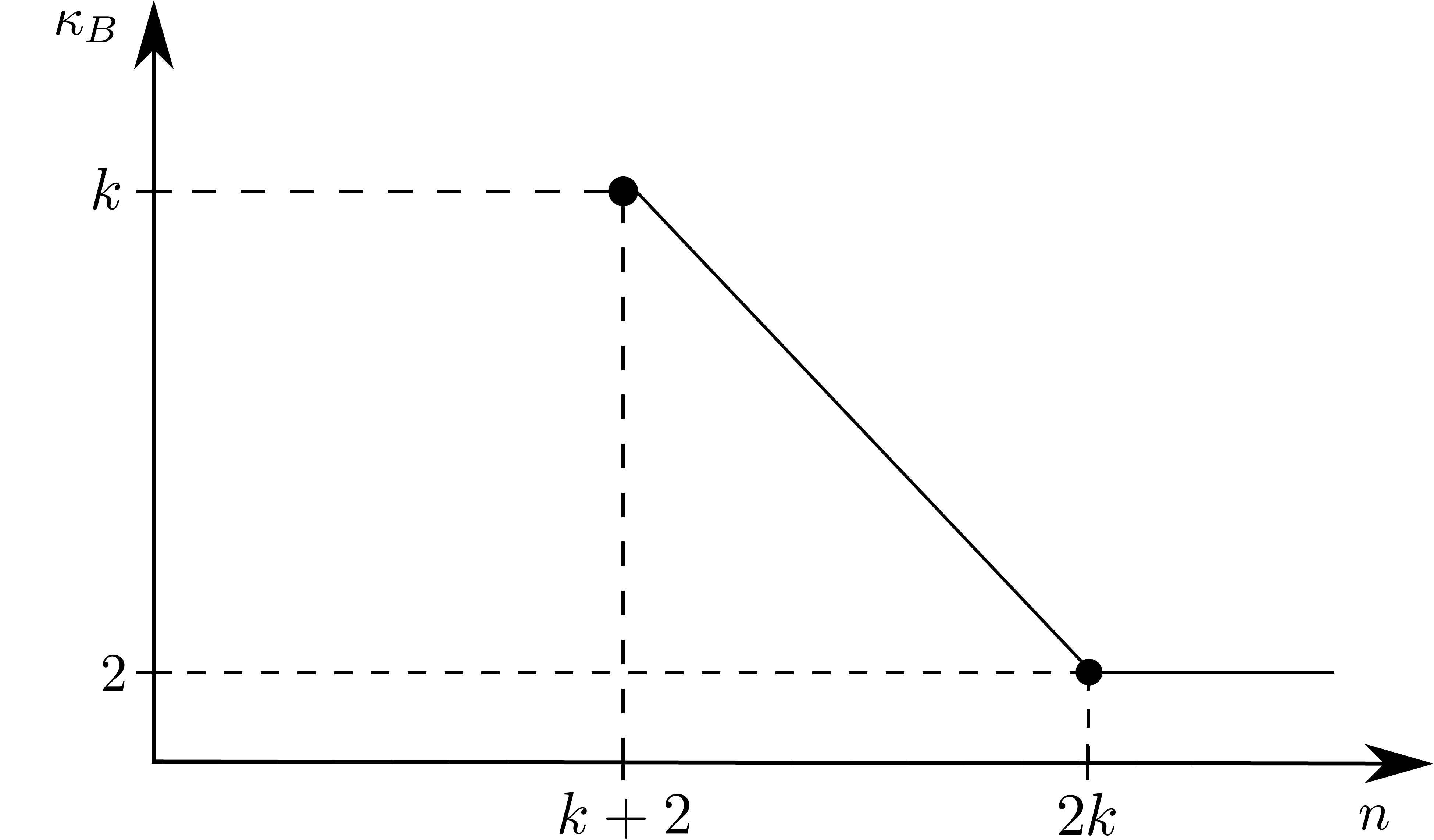}
	\caption{The biclique-chromatic number of a non-complete power of a path for a fixed value of
	$k$ and an increasing $n$}
	\label{fig:kappaboscilapath}
\end{figure}

\begin{figure}[t]
\centering
	\includegraphics[width=10cm]{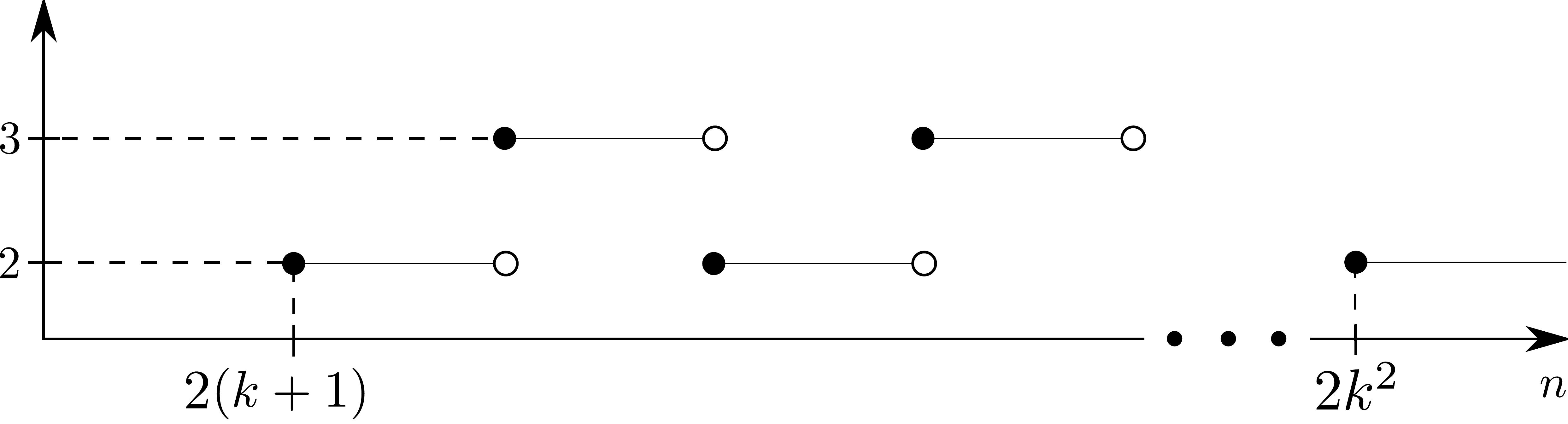}
	\caption{The biclique-chromatic number of a non-complete power of a cycle for a fixed value
	of $k$ and an increasing $n$}
	\label{fig:kappaboscilacycle}
\end{figure}

As a corollary of Theorem~\ref{thm:kappabpowerofcyclethirdinterval}, every non-complete
power of a cycle $C_{n}^{k}$ with $n \geq 2k^2$ has biclique-chromatic number~2.
Thus, the biclique-chromatic number of a power of a cycle $C_{n}^{k}$, for a fixed
value of $k$ and an increasing $n \geq 3k + 2$, does not oscillate forever.

\begin{corollary}
A non-complete power of a cycle $C_{n}^{k}$ with $n \geq 2k^2$ has biclique-chromatic
number~2. 
\end{corollary}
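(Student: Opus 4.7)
The plan is to apply Theorem~\ref{thm:kappabpowerofcyclethirdinterval} directly, reusing the reformulation developed in the proof of Theorem~\ref{thm:algoritmopowerofcycle}. Setting $c = a + b$, the task reduces to writing $n = ck + b$ with $c$ even, $c \geq 2$, and $0 \leq b \leq c$. The first attempt is the usual quotient--remainder pair $c_0 = \lfloor n/k \rfloor$ and $b_0 = n - c_0 k$, which automatically satisfies $0 \leq b_0 < k$.

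The hypothesis $n \geq 2k^2$ is what makes the argument work: it forces $c_0 \geq 2k$, so in particular $b_0 < k \leq c_0$. If $c_0$ is even, this gives the required decomposition immediately, with $c = c_0 \geq 2k \geq 2$. Otherwise, I would fall back on $c_1 = c_0 - 1$ and $b_1 = b_0 + k$ (exactly as in Algorithm~\ref{alg:numerobicliquecromaticopowerofcycle}), and the remaining check is $b_1 \leq c_1$. Since $b_0 \leq k-1$, one has $b_1 \leq 2k - 1$, and since $c_0 \geq 2k$, one has $c_1 \geq 2k - 1$; hence $b_1 \leq c_1$, and $c_1$ is even and at least $2k - 1 \geq 2$.

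The only loose end is confirming that $n \geq 2k^2$ implies $n \geq 3k + 2$ so that Theorem~\ref{thm:kappabpowerofcyclethirdinterval} is applicable in the first place. This holds for $k \geq 2$ via the factorization $2k^2 - 3k - 2 = (2k+1)(k-2) \geq 0$. For $k = 1$, the non-completeness assumption already gives $n \geq 2k+2 = 4$, and the single residual value $n = 4$ is covered by Theorem~\ref{thm:kappabpowerofcyclesecondinterval}, while $n \geq 5 = 3k+2$ is covered by the preceding paragraph verbatim. I expect no substantive obstacle: the whole proof is a short arithmetic deduction whose engine is precisely the slack $c_0 \geq 2k$ afforded by $n \geq 2k^2$, which is exactly what guarantees that shifting one unit of the quotient into the remainder (to fix parity) still keeps $b \leq c$.
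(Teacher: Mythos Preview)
Your argument is correct and follows essentially the same line as the paper: both use $n \geq 2k^2$ to force the quotient in the division by $k$ to be at least $2k$, giving enough slack to absorb the remainder as the coefficient $b$ and still keep $a+b$ even. The only differences are cosmetic---you route through the $c_0,c_1$ reformulation of Theorem~\ref{thm:algoritmopowerofcycle}, whereas the paper uses the even-quotient division of Theorem~\ref{thm:division} and sets $(a,b)=(a'-t,t)$ directly---and you are in fact more careful than the paper in explicitly verifying $n\geq 3k+2$ (and treating $k=1$) before invoking Theorem~\ref{thm:kappabpowerofcyclethirdinterval}.
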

\begin{proof}
Theorem~\ref{thm:division} says that $n = a^\prime k + t$ for natural numbers
$a^\prime $ and $t$, $a^\prime  \geq 2$ is even, and $0 \leq t < 2k$. If we can rewrite $n =
a k + b(k+1)$ with natural numbers $a$ and $b$, such that $a + b \geq 2$ is
even, then Theorem~\ref{thm:kappabpowerofcyclethirdinterval} says that a power of a cycle
$C_n^k$ with $n \geq 2k^2$ has biclique-chromatic number~2. Since $0 \leq t \leq
2k$, $n \geq 2k^2$, and $a^\prime$ is an even natural number, we have 

\begin{eqnarray*}
 n = a^\prime k + t &\geq& 2k^2     \nonumber \\
     a^\prime k &\geq& 2k^2 - 2k + 1     \nonumber \\
     a^\prime &\geq& 2k - 1     \nonumber \\
     a^\prime &\geq& 2k     \nonumber \\
\end{eqnarray*}

Let $a = a^\prime - t$ and $b = t$. Clearly, $a$ and $b$ are natural numbers. 
Moreover, $a + b \geq 2$ is even.
\end{proof}

Groshaus, Soulignac, and Terlisky have recently proposed a related hypergraph
colouring, called \emph{star-colouring}~\cite{1210.7269}, defined as
follows. A \emph{star} is a maximal set of vertices that induces a
complete bipartite graph with a universal vertex and at least one edge. 
The definition of star-colouring follows the same line as clique-colouring and
biclique-colouring: a \emph{star-colouring} of a graph $G$ is a function that
associates a colour to each vertex such that no star is monochromatic. The 
\emph{star-chromatic number} of a graph $G$, denoted by $\kappa_S(G)$, is the 
least number of colours $c$ for which $G$ has a star-colouring with at most~$c$ 
colours. Many of the results of biclique-colouring
achieved in the present work are naturally extended to star-colouring. Since the
constructed graph of Corollary~\ref{cor:checkbicliquecolouring} is $C_4$-free
and the bicliques in a $C_4$-free graph are precisely the stars of the graph,
we can restate Corollary~\ref{cor:checkbicliquecolouring} as follows below.

\begin{corollary}
Let $G$ be a $\{C_4, K_4\}$-free graph. It is co$\mathcal{NP}$-complete to
check if a colouring of the vertices of $G$ is a star-colouring.
\end{corollary}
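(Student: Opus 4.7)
The plan is to observe that, on any $C_4$-free graph, the biclique-hypergraph and the star-hypergraph coincide, so the verification problem for star-colourings is literally the same problem as the verification problem for biclique-colourings on this restricted class. Since Corollary~\ref{cor:checkbicliquecolouring} already delivers co$\mathcal{NP}$-completeness of the latter, and the reduction of Theorem~\ref{thm:bicliquecoloracaoinvalidanpcompleto} already outputs a $\{C_4,K_4\}$-free graph, the corollary will follow by directly restating that result in star-colouring language.

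First I would nail down the structural equivalence. Let $G$ be $C_4$-free and let $B$ be a biclique of $G$, inducing some $K_{p,q}$ with $p \le q$. If $p \ge 2$, then $B$ contains four vertices inducing $K_{2,2}=C_4$, contradicting $C_4$-freeness; hence $p=1$, and the unique vertex on the smaller side is universal in $G[B]$, so $B$ is precisely a star. Conversely, any star is by definition a maximal vertex set inducing a complete bipartite subgraph with a universal vertex and at least one edge; in a $C_4$-free graph, every maximal set inducing a complete bipartite subgraph with at least one edge has a universal vertex (again because the alternative would force a $K_{2,2}$), and so the maximality conditions for being a star and for being a biclique coincide. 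Therefore the biclique-hypergraph $\mathcal{H}_B(G)$ equals the star-hypergraph of $G$.

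From this structural coincidence it is immediate that, for any vertex colouring $\pi$ of a $C_4$-free graph $G$, the function $\pi$ is a biclique-colouring of $G$ if and only if $\pi$ is a star-colouring of $G$. In particular, on the class of $\{C_4,K_4\}$-free graphs, verifying a star-colouring is the same computational problem as verifying a biclique-colouring, and the latter is co$\mathcal{NP}$-complete by Corollary~\ref{cor:checkbicliquecolouring}. The only substantive step --- and where I expect the whole argument to concentrate --- is the structural equivalence of bicliques and stars under $C_4$-freeness; once that is in hand, no further reduction is needed because the instances produced by the proof of Theorem~\ref{thm:bicliquecoloracaoinvalidanpcompleto} already lie in the required class.
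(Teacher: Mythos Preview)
Your proposal is correct and follows essentially the same approach as the paper: the paper simply observes that in a $C_4$-free graph the bicliques are precisely the stars, and hence Corollary~\ref{cor:checkbicliquecolouring} can be restated verbatim in star-colouring language. Your write-up makes the structural equivalence of bicliques and stars in $C_4$-free graphs more explicit than the paper does, but the argument is the same.
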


About star-colouring and the investigated classes of power graphs,
we also have some few remarks. On one hand, the bicliques of a power of a path $P_n^k$ are
the stars of the graph and, consequently, all results obtained for
biclique-colouring powers of paths hold to star-colouring powers of paths.
On the other hand, a power of a cycle $C_n^k$ is not necessarily $C_4$-free, and
there are examples of powers of cycles with $P_3$ stars that are not bicliques
due to the fact that such $P_3$ stars are contained in $C_4$ bicliques of the
graph. This happens for instance in the case $n \in [2k+2, 3k+1]$ and one such
example is graph $C_{11}^4$ exhibited in Figure~\ref{fig:c114}. Notice that the highlighted
vertices form a monochromatic $P_3$ star, so that the colouring is not a
2-star-colouring. The three highlighted vertices together with vertex $u$, on
the other hand, form a polychromatic $C_4$ biclique --- indeed, the
exhibited colouring is a 2-biclique-colouring. 
We summarize the results about star-colouring powers of paths and powers of
cycles in the following theorems and also in Table~\ref{t:tabela}. Please refer
to the line of the table where we consider a power of a cycle with $n \in
[2k+2, 3k+1]$ to check the difference between the biclique-chromatic number
(which is always 2) and the star-chromatic number (which depends on $n$ and
$k$).

\begin{figure}[t]
\centering
	\includegraphics[scale=0.2]{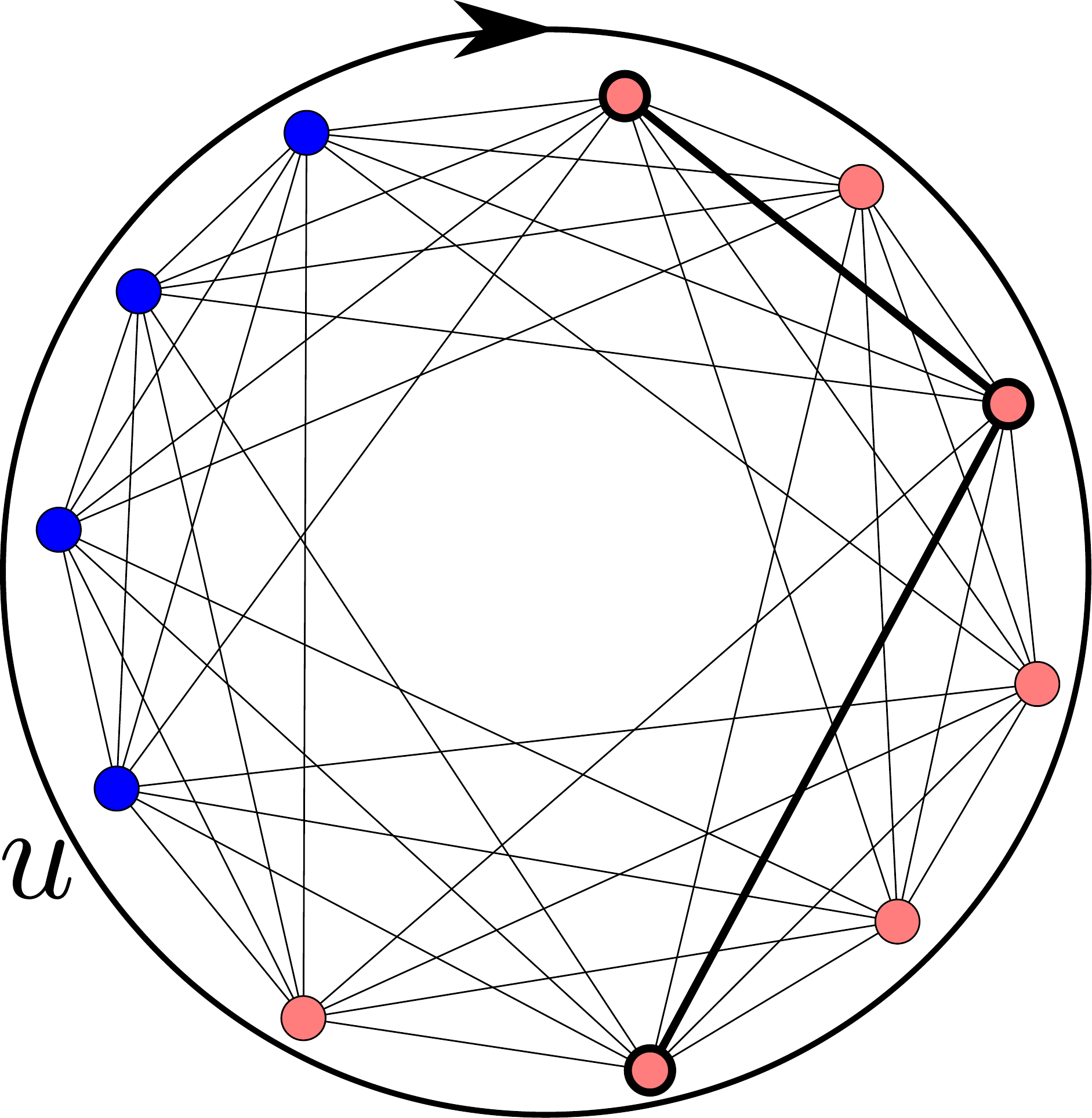}
	\caption{Power of a cycle $C_{11}^{4}$ with a
	2-biclique-colouring which is not a 2-star-colouring. 
	Notice that there exists a monochromatic $P_{3}$ star
	highlighted in bold.}
	\label{fig:c114}
\end{figure}

\begin{theorem}
For any power of a path, the star-chromatic number is equal to the
biclique-chromatic number.
\end{theorem}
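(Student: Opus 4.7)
The plan is to derive the equality from the observation that in a power of a path, the collections of bicliques and stars literally coincide as subsets of vertices. Once this set-theoretic equality is established, the biclique-hypergraph and the star-hypergraph are the same hypergraph, so any colouring is a biclique-colouring if and only if it is a star-colouring, and the two chromatic parameters agree.

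First, I would invoke the fact recalled in the proof of Lemma~\ref{lem:powerofpathsbicliques} that every power of a path is $\{K_{1,3},C_4\}$-free (only the $C_4$-freeness is needed here). In any $C_4$-free graph $G$, any set $K$ of vertices inducing a complete bipartite subgraph $K_{a,b}$ with $a,b\geq 1$ must satisfy $\min(a,b)\leq 1$: otherwise two vertices on each side of the bipartition would induce a $C_4$ in $G$. Hence every complete bipartite subgraph of $P_n^k$ (with at least one edge) is of the form $K_{1,s}$ with $s\geq 1$, and therefore possesses a universal vertex (when $s=1$, both vertices of the $K_2$ are trivially universal in the induced subgraph).

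Next, I would show that maximality is preserved in both directions, so that bicliques and stars coincide. If $K$ is a biclique of $P_n^k$, then by the previous paragraph $K$ induces a $K_{1,s}$ with a universal vertex, and any proper extension of $K$ that still induces a complete bipartite graph would contradict $K$'s maximality; in particular no extension can keep both the complete-bipartite property and a universal vertex, so $K$ is a star. Conversely, if $K$ is a star of $P_n^k$, any extension $K\cup\{v\}$ that still induces a complete bipartite graph would again be $K_{1,s'}$ with a universal vertex by $C_4$-freeness, contradicting the maximality of $K$ as a star. Thus $\mathcal{E}_B=\mathcal{E}_S$ as sets of hyperedges on $V(P_n^k)$.

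The conclusion is then immediate: a function $\pi:V(P_n^k)\to\{1,\ldots,c\}$ leaves no biclique monochromatic if and only if it leaves no star monochromatic, so $\kappa_B(P_n^k)=\kappa_S(P_n^k)$. There is no real obstacle in this argument; the only point requiring a moment of care is the degenerate $s=1$ case, where a $P_2$ biclique is simultaneously a $K_2$ with a trivially universal vertex and therefore qualifies as a star under the definition recalled in the paper.
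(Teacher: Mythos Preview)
Your proposal is correct and takes essentially the same approach as the paper: the paper remarks (without a formal proof) that in a power of a path the bicliques are precisely the stars because $P_n^k$ is $C_4$-free, and you have supplied the clean justification of that coincidence, including the maximality check in both directions. Nothing further is needed.
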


\begin{theorem}
A power of a cycle $C_n^k$, when $n \leq 2k + 1$ or $n \geq 3k+2$, has
star-chromatic number equal to the biclique-chromatic number.
If $2k + 2 \leq n \leq 3k + 1$, then $C_n^k$ has star-chromatic number~2 if, and
only if, there exist natural numbers $a$ and $b$, such that $n = ak + b(k+1)$ and
$a + b \geq 2$ is even. If there does not exist such natural numbers, it has
star-chromatic number~3.
\end{theorem}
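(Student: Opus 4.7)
The plan is to reduce star-colouring of $C_n^k$ to the purely structural question of avoiding monochromatic $P_3$ subgraphs, which lets us recycle Lemma~\ref{l:iff} and the divisibility analysis behind Theorem~\ref{thm:kappabpowerofcyclethirdinterval}. For $n \leq 2k + 1$ the graph is $K_n$, so every induced $K_{1,m}$ (resp.\ $K_{a,b}$) forces $m \leq 1$ (resp.\ $a = b = 1$); stars and bicliques both coincide with the edges, giving $\kappa_S(C_n^k) = \kappa_B(C_n^k) = n$.

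For $n \geq 2k + 2$ I would first identify the stars of $C_n^k$. The neighbors of a vertex lie in a cyclic interval of diameter $2k$, and for any three points $p_1 < p_2 < p_3$ in such an interval one has $p_3 - p_1 \leq 2k$, so the three pairwise differences cannot all exceed $k$; consequently $C_n^k$ admits no induced $K_{1,3}$. Combined with the fact (established in the proof of Lemma~\ref{lem:powerofcyclesbicliques}) that every edge of $C_n^k$ is properly contained in an induced $P_3$, this yields: no edge is a maximal $K_{1,m}$, while every induced $P_3$ is star-maximal, since the only possible star-extension of a $P_3$ would be to a forbidden $K_{1,3}$ (a $K_{2,2}$-extension does not qualify as a star, having no universal vertex). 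Hence the stars of $C_n^k$ are exactly its induced $P_3$'s, and a colouring of $V(C_n^k)$ is a star-colouring iff no $P_3$ subgraph is monochromatic.

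With this translation in hand, Lemma~\ref{l:iff} says a 2-colouring of $C_n^k$ has no monochromatic $P_3$ iff every monochromatic block has size $k$ or $k+1$; since the colours alternate around the cycle, summing block lengths gives $n = a k + b(k+1)$ for some natural $a,b$ with $a + b \geq 2$ even, matching the condition of Theorem~\ref{thm:kappabpowerofcyclethirdinterval}. Meanwhile Lemma~\ref{lem:3colouringnomonoP3} exhibits a 3-colouring of $C_n^k$ with no monochromatic $P_3$ for every $n \geq 2k+2$, so $\kappa_S(C_n^k) \leq 3$. Therefore $\kappa_S(C_n^k) = 2$ precisely when the equation is solvable and $\kappa_S(C_n^k) = 3$ otherwise. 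Comparing with the biclique-chromatic numbers already determined: for $n \geq 3k+2$, Theorem~\ref{thm:kappabpowerofcyclethirdinterval} characterizes $\kappa_B = 2$ by the same equation, so $\kappa_S = \kappa_B$; for $2k + 2 \leq n \leq 3k + 1$, Theorem~\ref{thm:kappabpowerofcyclesecondinterval} forces $\kappa_B = 2$ unconditionally while $\kappa_S$ still obeys the equation, which is exactly the asymmetry displayed by $C_{11}^4$ in Figure~\ref{fig:c114}.

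The main obstacle is the structural claim identifying the stars of $C_n^k$ as precisely the induced $P_3$'s in the range $n \geq 2k + 2$: it requires simultaneously ruling out $K_{1,3}$ stars (the pigeon argument on the length-$2k$ neighbor interval), showing every edge sits inside an induced $P_3$ (so no edge is star-maximal), and observing that $K_{2,2}$-extensions do not threaten star-maximality. Once this identification is pinned down, each remaining step of the theorem reduces to a previously established lemma or theorem from the paper.
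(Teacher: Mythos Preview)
Your proposal is correct and matches the approach the paper implicitly intends. The paper does not spell out a proof of this theorem; it only remarks that the biclique-colouring results ``are naturally extended to star-colouring'' and points to the $C_{11}^4$ example to illustrate where $\kappa_S$ and $\kappa_B$ diverge. Your argument makes this extension precise: the key structural step---that for $n \geq 2k+2$ the stars of $C_n^k$ are exactly the induced $P_3$'s (using claw-freeness, the fact that every edge sits in an induced $P_3$, and that a $K_{2,2}$-extension has no universal vertex and hence is not a star)---is exactly the identification the paper's discussion presupposes, after which Lemmas~\ref{lem:3colouringnomonoP3} and~\ref{l:iff} carry over verbatim to yield the stated characterisation.
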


\begin{table}[h]
\begin{center}
\begin{tabular}{|c||l|p{3cm}|p{3cm}|}
\hline
  Graph $G$ & Range of $n$ & $\kappa_B(G)$ & $\kappa_S(G)$ \\ \hline\hline
  \multirow{3}{*}{$P_{n}^{k}$} & $[1, k + 1]$ & $n$ & $n$ \\
  \cline{2-4}
   & $[k + 2, 2k]$ & $2k + 2 - n$ & $2k + 2 - n$\\
  \cline{2-4}
   & $[2k + 1, \infty[$ & $2$ & $2$ \\
  \hline\hline
  \multirow{5}{*}{$C_{n}^{k}$} & $[1, 2k + 1]$ & $n$ & $n$ \\
  \cline{2-4}
   & $[2k+2, 3k+1]$ & $2$ &\\
  \cline{2-3}
   & \multirow{4}{*}{$[3k + 2, 2k^2[$} & \multicolumn{2}{l|}{$2$, if there
   exist natural numbers $a$ and $b$,} \\
   & & \multicolumn{2}{l|}{such that $n~=~ak~+~b(k+1)$} \\
   & & \multicolumn{2}{l|}{and $a + b \geq 2$ is
   even;} \\
  & & \multicolumn{2}{l|}{$3$, otherwise.} \\
  \cline{2-4}
   & $[2k^2, \infty[$ & $2$ & $2$\\
  \hline
\end{tabular}
\caption{Biclique- and star-chromatic numbers of powers of paths and powers of cycles}
\label{t:tabela}
\end{center}
\end{table}

A \emph{distance graph} $P_n(d_1,\dots, d_k)$ is a simple graph with
$V(G)= \{v_0,\dots, v_{n-1}\}$ and $E(G)=E^{d_1}\cup\dots\cup E^{d_k}$, such
that $\{v_i,v_j\}\in E^{d_\ell}$ if, and only if, it has reach -- in the context
of a power of a path -- $d_\ell$. Notice that a distance graph
$P_n(d_1,\dots,d_k)$ is a power of a path if $d_1=1$, $d_i=d_{i-1}+1$, and
$d_k< n - 1$. A \emph{circulant graph} $C_n(d_1,\dots, d_k)$ has the same
definition as the distance graph, except by the reach, which, in turn, is in
the context of a power of a cycle. Notice that a circulant
graph~$C_n(d_1,\dots,d_k)$ is a power of a cycle if $d_1=1$, $d_i=d_{i-1}+1$,
and $d_k< \lfloor \frac{n}{2} \rfloor$.
Circulant graphs have been proposed for various practical
applications~\cite{circulantgraphapplication}.
We suggest, as a future work, to biclique colour the classes of distance
graphs and circulant graphs, since colouring problems for distance graphs and
for circulant graphs have been extensively
investigated~\cite{MR2567972,MR1900685,MR1632015}.
Moreover, some results of intractability have been obtained, e.g. determining
the chromatic number of circulant graphs in general is an $\mathcal{NP}$-hard
problem~\cite{MR1653503}.

\section*{Acknowledgments}
The authors would like to thank Renan Henrique Finder for the discussions on
the algorithm to compute the biclique-chromatic number of a power of a cycle
$C_n^k$, when $n \geq 3k + 2$; and to thank Vin{\'i}cius Gusm{\~a}o Pereira de
S{\'a} and Guilherme Dias da Fonseca for discussions on the complexity of
numerical problems. At last, but not least, we thank Vanessa Cavalcante for
the careful proofreading of this paper.

\bibliographystyle{plainnat}

\bibliography{ctw-2012-full}

\end{document}